\newcommand{\blind}{0}
\newcommand{\R}{\mathbb{R}}
\newcommand{\N}{\mathbb{N}}
\newcommand{\E}{\mathbb{E}}
\newcommand{\p}{\mathbb{P}}
\newcommand{\norm}[1]{\lVert#1\rVert}
\newcommand{\abs}[1]{\lvert#1\rvert}
\newcommand{\betahat}{\hat{\beta}}
\newcommand{\betahatlone}{\hat{\beta}^{\ell_1}}
\newcommand{\Shat}{\hat{S}}
\newcommand{\gammahat}{\hat{\gamma}}
\newcommand{\comp}[1]{\overline{#1}}
\DeclareMathOperator*{\argmin}{arg\,min}
\DeclareMathOperator*{\sign}{sign}
\DeclareMathOperator*{\rank}{rank}
\DeclareMathOperator*{\diag}{diag}
\newtheoremstyle{thm}
{}
{}
{\itshape}
{}
{\bf}
{.}
{ }
{}
\newtheoremstyle{text}
{}
{}
{\upshape}
{}
{\bf}
{.}
{ }
{}
\theoremstyle{thm}
\newtheorem{lemma}{Lemma}
\newtheorem{prop}{Proposition}
\newtheorem{theorem}{Theorem}
\newtheorem{corollary}{Corollary}
\theoremstyle{text}
\newtheorem{defn}{Definition}
\begin{document}

\def\spacingset#1{\renewcommand{\baselinestretch}%
{#1}\small\normalsize} \spacingset{1}


\if0\blind
{
  \title{\bf Model selection with Lasso-Zero: adding straw to the haystack to better find needles}
  \author{Pascaline Descloux\thanks{pascaline.descloux@unige.ch}\hspace{.2cm}\\
    and \\
    Sylvain Sardy\thanks{sylvain.sardy@unige.ch} \\
    Department of Mathematics, University of Geneva, Switzerland}
    \date{}
  \maketitle
} \fi

\if1\blind
{
  \bigskip
  \bigskip
  \bigskip
  \begin{center}
    {\LARGE\bf Title}
\end{center}
  \medskip
} \fi

\bigskip
\begin{abstract}
The high-dimensional linear model $y = X \beta^0 + \epsilon$ is considered and the focus is put on the problem of recovering the support $S^0$ of the sparse vector~$\beta^0.$ We introduce \emph{Lasso-Zero}, a new $\ell_1$-based estimator whose novelty resides in an ``overfit, then threshold" paradigm and the use of noise dictionaries concatenated to $X$ for overfitting the response. To select the threshold, we employ the quantile universal threshold based on a pivotal statistic that requires neither knowledge nor preliminary estimation of the noise level. Numerical simulations show that Lasso-Zero performs well in terms of support recovery and provides an excellent trade-off between high true positive rate and low false discovery rate compared to competitors. Our methodology is supported by theoretical results showing that when no noise dictionary is used, Lasso-Zero recovers the signs of $\beta^0$ under weaker conditions on $X$ and $S^0$ than the Lasso and achieves sign consistency for correlated Gaussian designs. The use of noise dictionary  improves the procedure for low signals.
\end{abstract}

\noindent%
{\it Keywords:} high-dimension, linear regression, sparsity, support recovery, thresholded basis pursuit, variable selection
\vfill

\newpage
\spacingset{1.5} 

\section{Introduction}
\subsection{Recovery of a sparse vector in the linear model}
Nowadays in many statistical applications the number $p$ of parameters exceeds the number $n$ of observations. In such a high-dimensional setting, additional assumptions  are necessary for estimation of the parameters to be possible. For the linear model
\begin{equation} \label{lin_model}
y = X \beta^0 + \epsilon,
\end{equation}
where $X$ is a matrix of size $n \times p$ and $\epsilon \sim N_n(0, \sigma^2 I)$ is the noise component, the coefficient vector $\beta^0 \in \R^p$ is often assumed to be sparse, meaning that most of its entries are zero. In a regression setting where each column $X_j \in \R^n$ of $X$ corresponds to a potential predictor and $y$ is the response variable, sparsity means that only a few of the predictors at hand are relevant for predicting $y$. Genetics provides a typical modern example, with datasets containing expression levels of thousands of genes for only a few observational units, and where researchers are often interested in identifying genes related to a certain medical trait. Another example is the problem of recovering discontinuity points in a piecewise constant function $f$ given a noisy observation, which can also be formulated in the form~(\ref{lin_model}), where a nonzero component in $\beta^0$ corresponds to a discontinuity in $f.$ These examples have the common goal to correctly identify the support $S^0:= \{j \mid \beta^0_j \neq 0 \}$ of $\beta^0$. 
A vast amount of literature is dedicated to this problem, in the noisy as well as in the noiseless ($\sigma = 0$) case. In the noiseless case where $y = X \beta^0$, it is ideally desired to recover the sparsest vector $\beta$ satisfying $y = X \beta$, i.e. the solution to
\begin{equation} \label{l0_problem}
\begin{aligned}
& \min_{\beta \in \R^p} && \norm{\beta}_0 \\
& \text{ s.t. } && y = X \beta,
\end{aligned}
\end{equation}
where $\norm{\beta}_0$ denotes the number of nonzero coefficients of $\beta.$ The problem~(\ref{l0_problem}) being NP-hard \citep{natarajan1995}, the convex relaxation
\begin{equation} \label{basis_pursuit}
\begin{aligned}
& \min_{\beta \in \R^p} && \norm{\beta}_1 \\
 &  \text{ s.t.} &&  y = X \beta,
\end{aligned}
\end{equation}
called the basis pursuit (BP) problem \citep{chen2001}, provides a computationally attractive alternative. Problem~(\ref{basis_pursuit}) used as a proxy for~(\ref{l0_problem}) as been well studied in mathematical signal processing (see \citet{foucart2013} and references therein) and it is known that under some conditions the solutions of~(\ref{basis_pursuit}) are solutions to~(\ref{l0_problem}) as well \citep{donoho2001, elad2002, donoho2003, gribonval2003, fuchs2004, donoho2006, candes2006robust}.

When the observations are corrupted by noise, the most natural adaptation of~(\ref{basis_pursuit}) is to modify the constraint to allow $\norm{y - X \beta}_2$ to be positive, but not greater than a certain value. This yields basis pursuit denoising \citep{candes2006stable, donoho2006stable}:
\begin{equation} \label{basis_pursuit_denoising}
\begin{aligned}
& \min_{\beta \in \R^p} && \norm{\beta}_1 \\
& \text{ s.t.} &&  \norm{y - X \beta}_2 \leq \theta.
\end{aligned}
\end{equation}
Basis pursuit denoising is known (see for example \citet{foucart2013}) to be equivalent to the Lasso estimator \citep{tibshirani1996}
\begin{equation} \label{lasso}
\begin{aligned}
&\betahat^{\rm{lasso}}_\lambda = \argmin_{\beta \in \R^p} \frac{1}{2} \norm{y - X \beta}_2^2 + \lambda \norm{\beta}_1, \quad \lambda > 0.
\end{aligned}
\end{equation}
The Lasso has been extensively studied in the literature (see \citet{buhlmann2011} and references therein). Under some conditions, it is known to detect all important variables with high probability for an appropriate choice of $\lambda$. However it often includes too many false positives \citep{su2017}, and exact support recovery can be achieved only under a strong condition on $X$ and $\sign(\beta^0)$, called the irrepresentable condition \citep{zhao2006, zou2006, buhlmann2011}. 

Many regularized estimators based on the $\ell_1$-penalty have been proposed. Some like adaptive Lasso \citep{zou2006}, thresholded Lasso \citep{vandegeer2011} and relaxed Lasso \citep{meinshausen2007} add a second selection step after solving the Lasso. The group Lasso \citep{yuan2005} selects grouped variables; the elastic net \citep{zou2005} combines the $\ell_1$ and $\ell_2$ types of penalty; the Dantzig selector \citep{candes2007} minimizes the $\ell_1$-norm subject to a constraint on the maximal correlation between $X$ and the residual vector. Resampling procedures like stability selection \citep{meinshausen2010} and more recent techniques like the knockoff filter \citep{barber2015, candes2018} and SLOPE \citep{bogdan2015} have been introduced to control some measures of the type I error. Although differing in their purposes and performance, the general idea underlying these procedures remains the same, namely to avoid overfitting by finding a trade-off between the fit $y - X \beta$ and some measure of the model complexity.

\subsection{Our approach and contribution}
Relying on the success of BP~(\ref{basis_pursuit}) in the noiseless case, we argue that there is another way to adapt to the presence of noise. A naive alternative to~(\ref{basis_pursuit_denoising}) is to overfit the response $y$ by solving BP in a first step, and then to threshold the obtained solution to retain only the largest coefficients. We prove in Section~\ref{theory} that this procedure requires less stringent conditions on $X$ and $S^0$ than the Lasso for exact support recovery. 

Figure~\ref{lasso_path} illustrates a typical example where $S^0$ can be recovered exactly by a thresholded BP solution, but not by the Lasso. It corresponds to the simulation setting~(\ref{wideGauss}) described in Section~\ref{settings} with a support $S^0$ of size $\abs{S^0} = 10$. The curves represent all components $\betahat^{\rm{lasso}}_{\lambda, j}$ of the Lasso solution as $\lambda$ varies. The red curves correspond to indices $j$ belonging to the support $S^0$. Looking at the path it is clear that the Lasso cannot recover the true support in this case since there is no value of $\lambda$ for which the estimated support equals~$S^0$. On the other hand, one does see a clear separation between the red and black curves if one looks at the limit solution of the path as $\lambda$ tends to zero. Interestingly, the limit $\lim_{\lambda \to 0^+} \betahat^{\rm{lasso}}_\lambda$ of the Lasso path is known to be a solution to BP \citep{fuchs2004}. So in this case there exists a threshold level $\tau$ -- e.g. the one indicated horizontally in Figure~\ref{lasso_path} -- such that the thresholded BP solution is exactly supported on $S^0.$
\begin{figure}
\begin{center}
\includegraphics[scale=0.3]{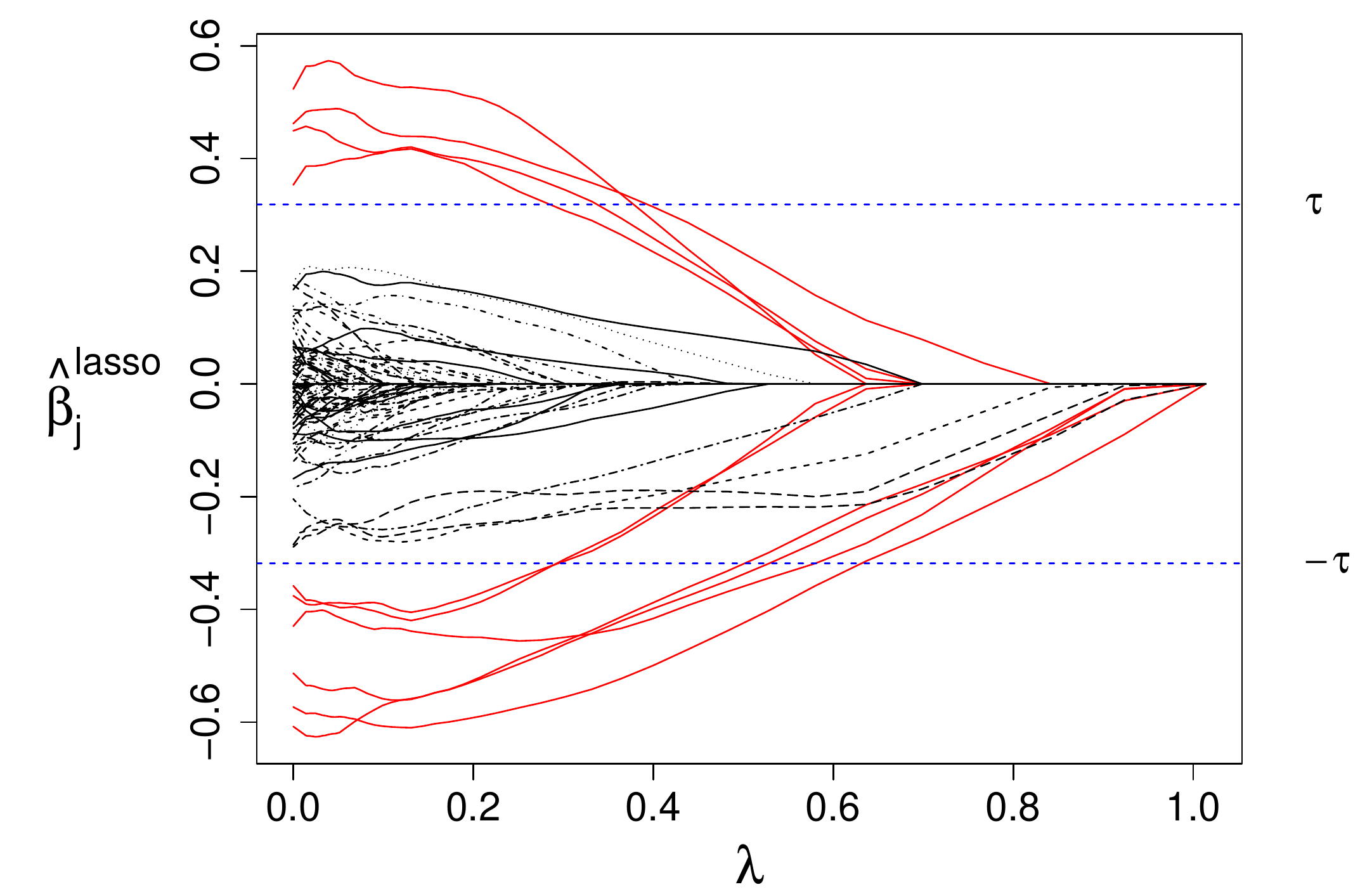}
\caption{Path of all Lasso solutions for a signal generated according to setting~(\ref{wideGauss}) in Section~\ref{settings}. The support $S^0$ has size $10$ and is represented by the red curves. None of the Lasso solutions recover $S^0$, whereas the BP solution -- corresponding to the limit Lasso solution as $\lambda$ tends to $0$ -- recovers $S^0$ exactly if thresholded properly.}
\label{lasso_path}
\end{center}
\end{figure} 

Motivated by this observation, we keep in mind an ``overfit, then threshold" paradigm and introduce Lasso-Zero, an $\ell_1$-based estimator improving the procedure described above in cases where the signal is too low for basis pursuit to extract the right coefficients. Apart from focusing on the Lasso solution at $\lambda = 0$ -- which motivated its name -- the novelty of Lasso-Zero also resides in the use of several random noise dictionaries in the overfitting step followed by the aggregation of the corresponding estimates.

The choice of the threshold level is crucial. A good threshold achieves a high true positive rate (TPR) for reasonable signal-to-noise ratio while maintaining a low false discovery rate (FDR). Also, a good threshold selection can be used even if the noise level~$\sigma$ is unknown: we opt for quantile universal thresholding~\citep{giacobino2017}, a procedure that controls FDR under the null model~$\beta^0 = 0$. 
Our numerical simulations show that Lasso-Zero  provides an excellent trade-off between low FDR and high TPR and in many cases outperforms its competitors in terms of exact support recovery.

\subsection{Related work}
The methodology proposed in this paper is an extension of the thresholded basis pursuit estimator, which has been little studied in the literature. \citet{saligrama2011} prove that the multistage procedure consisting in 1) solving BP, 2) thresholding, 3) computing the least-squares solution on the obtained support and 4) thresholding again, consistently recovers the signs of $\beta^0$ for i.i.d.~Gaussian designs. Our methodology differs from theirs in that we suggest the repeated use of noise dictionaries to improve thresholded BP rather than following their steps 3) and 4). Secondly, their results assume some restricted isometry property which is stronger than our assumption and allows no direct comparison to the necessary conditions required by the Lasso for exact model selection. Thirdly, we propose a way to select the threshold level in practice, unlike the aforementioned work in which the threshold depends on the (unknown) minimal nonzero coefficient in $\beta^0.$ More recently, \citet{tardivel2018} analyzed thresholded BP under a weak identifiability assumption and obtained results similar to the ones we present in Section~\ref{subsection:analysis_SNSP}.

\subsection{Organization of the paper}
The Lasso-Zero estimator is defined in Section~\ref{lasso-zero}. The choice of the threshold $\tau$ is then discussed in Section~\ref{thresh_selection}: the quantile universal threshold (QUT) for Lasso-Zero is derived in Section~\ref{qut}, and Section~\ref{GEV} suggests to rely on a generalized extreme value (GEV) parametric fit to approximate QUT. Results of numerical simulations are presented in Section~\ref{simus}. Section~\ref{theory} presents theoretical results motivating our methodology, namely that thresholding the BP solution recovers $\sign(\beta^0)$ under weaker conditions on the design and the support than the Lasso (Section~\ref{subsection:analysis_SNSP}), and consequently achieves sign consistency for correlated Gaussian designs (Section~\ref{subsection:sign_consistency_Gauss}). We prove guarantees offered by QUT in terms of $\operatorname{FWER}$ and $\operatorname{FDR}$ control in Section~\ref{lowdim_FDR}. A final discussion is given in Section~\ref{discussion}.

\vspace{1em}
\textbf{Notation:} For a matrix $X \in \R^{n \times p}$ and a subset $S \subset \{1, \ldots, p\}$ of size $s$, the submatrix of size $n \times s$ consisting of all columns of $X$ indexed by elements $j \in S$ is denoted $X_S$. Similarly, for a vector $\beta \in \R^p$, $\beta_S$ denotes the vector in $\R^s$ obtained by extracting all components of $\beta$ indexed by $S$. For two vectors $\beta, \tilde{\beta} \in \R^p,$ $\beta \overset{\operatorname{s}}{=} \tilde{\beta}$ indicates that the signs of $\beta$ and $\tilde{\beta}$ are componentwise equal.

\section{Lasso-Zero} \label{lasso-zero}
To recover the support $S^0$ of $\beta^0$ in the noisy linear model~(\ref{lin_model}), we could ignore the noise and solve the BP problem~(\ref{basis_pursuit}) in a first step, and in a second step threshold the obtained solution so that only the large coefficients remain. It turns out that if the perturbation $\epsilon$ is small enough, this procedure can recover $S^0$ exactly under a weak assumption on $X$ and $S^0$ (see Section~\ref{theory}). However in practice the signal is often too low and it is not desirable to enforce the exact fit $y=X \beta$ as it is constrained by BP. We therefore suggest to use a \emph{noise dictionary} consisting in a random matrix $G \in \R^{n \times q}$ to fit the noise term. Indeed if $\rank{G} = n$ then there exists a vector $\gamma_{\epsilon, G} \in \R^q$ such that $\epsilon = G \gamma_{\epsilon, G}$ and the equality in model~(\ref{lin_model}) can be rewritten as
$
y = X\beta^0 + G\gamma_{\epsilon, G}.
$
One obtains estimates for $\beta^0$ and $\gamma_{\epsilon, G}$ by solving BP for the extended matrix $(X \mid G),$ i.e.
\begin{equation} \label{strawlass0}
\begin{aligned}
 (\betahat, \gammahat) = & && \argmin_{\beta \in \R^p, \gamma \in \R^q} && \norm{\beta}_1 + \norm{\gamma}_1 \\
& &&\text{\quad s.t. } && y = X\beta + G \gamma.
\end{aligned}
\end{equation}
If the solution $(\betahat, \gammahat)$ is unique then it contains at most $n$ nonzero entries \citep[Theorem 3.1]{foucart2013}. So the extended BP problem (\ref{strawlass0}) selects some columns of $X$ and $G$ to fit the response $y,$ which is the sum of signal and noise. It now becomes clear that the purpose of the random dictionary $G$ is to provide new columns that can be selected to fit the noise term $\epsilon,$ so that the columns of $X$ can be mostly used to fit the signal $X \beta^0$. For this selection to be fair the columns of $X$ and $G$ should be standardized to be on the same scale, e.g. to have the same $\ell_2$-norm (see also our remark at the end of Section~\ref{settings}). 

There is no guarantee that $\betahat$ is supported on $S^0$: for instance, if by misfortune some columns of $G$ are strongly correlated with a column $X_j$ with $j \in S^0$, then BP might set~$\betahat_j$ to zero in favor of columns of $G$. A way to circumvent this is to repeat the procedure several times and take the median for each component of the estimated $\beta$s, since for well behaved $X$ it can be expected that most of the times $\betahat_j$ will be set to zero for $j \in \comp{S^0}$ and nonzero for $j \in S^0$. This gives the following estimator.

\begin{defn} \label{deflass0}
For given $q \in \N, M \in \N^\star$ and for a given threshold $\tau \geq 0$, the \emph{Lasso-Zero} estimator $\betahat^{\operatorname{lass0}(q, M)}_\tau$ is defined as follows:
\begin{enumerate}
\item For $k= 1, \ldots, M:$ generate $G^{(k)} \in \R^{n \times q}$ with entries $G^{(k)}_{ij} \overset{\textrm{i.i.d.}}{\sim} N(0, 1),$ then compute the solution $(\betahat^{(k)}, \gammahat^{(k)})$ to~(\ref{strawlass0}) with $G = G^{(k)}.$
\item \label{lass0_before_thresh} Define 
$
\betahat^{\ell_1}_j := \operatorname{median} 
\{ \betahat^{(k)}_j, k=1, \ldots, M \}$   for $j \in \{1, \ldots, p\}$.  
\item Threshold the coefficients at level $\tau$ to obtain
$\betahat^{\operatorname{lass0}(q,M)}_{\tau, j} := \eta_\tau (\betahat^{\ell_1}_j) $ for $j \in \{1, \ldots, p\}$,
where $\eta_\tau$ denotes any thresholding function satisfying $\eta_\tau(x) = 0$ if $\abs{x} \leq \tau$ and $\sign(\eta_\tau(x)) = \sign(x)$ otherwise; typically soft or hard thresholding are used.
\end{enumerate}
We will often omit the parameters $(q,M)$ and write $\betahat^{\operatorname{lass0}}_\tau$ to avoid heavy notation. The obtained estimated support is denoted
$
\Shat^{\operatorname{lass0}}_\tau := \{j \mid \betahat^{\operatorname{lass0}}_{\tau, j} \neq 0 \}.
$
\end{defn}


For the sake of generality Lasso-Zero is defined for any $q \geq 0.$ This allows us to consider Lasso-Zero as an extension of thresholded basis pursuit \citep{saligrama2011}, which is obtained with $q=0$ and $M=1$ (see Section~\ref{theory}). In practice, since the dictionaries $G^{(k)} \in \R^{n \times q}$ aim at fitting the noise $\epsilon,$ it is desirable to choose $q \geq n.$ In this case problem~(\ref{strawlass0}) is almost surely feasible for any matrix $X.$ We suggest and use $q=n$ in Section~\ref{simus}. 

\renewcommand{\thesubfigure}{(\roman{subfigure})}
\begin{figure}[!ht]
	\centering
	\begin{subfigure}[t]{7cm}
		\centering
		\includegraphics[scale=0.27]{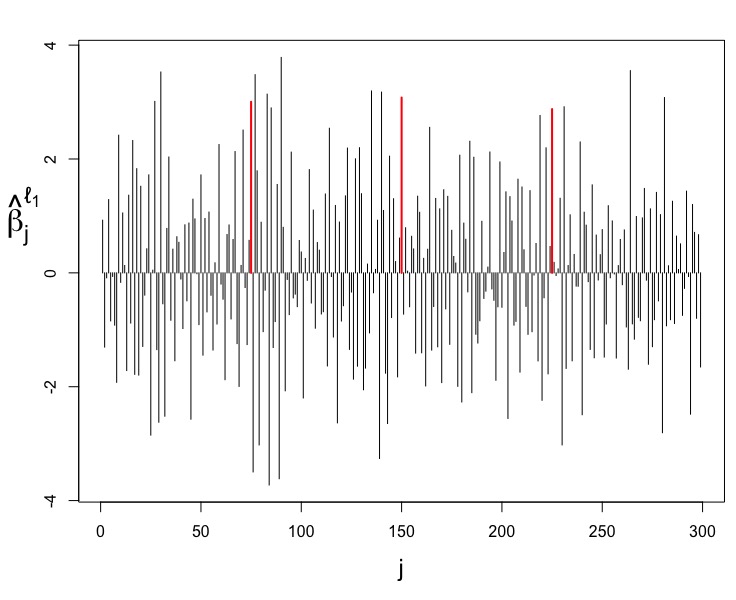}
		\caption{$(q, M)=(0, 1)$\label{seg_BP}}		
	\end{subfigure}
	\qquad
	\begin{subfigure}[t]{7cm}
		\centering
		\includegraphics[scale=0.27]{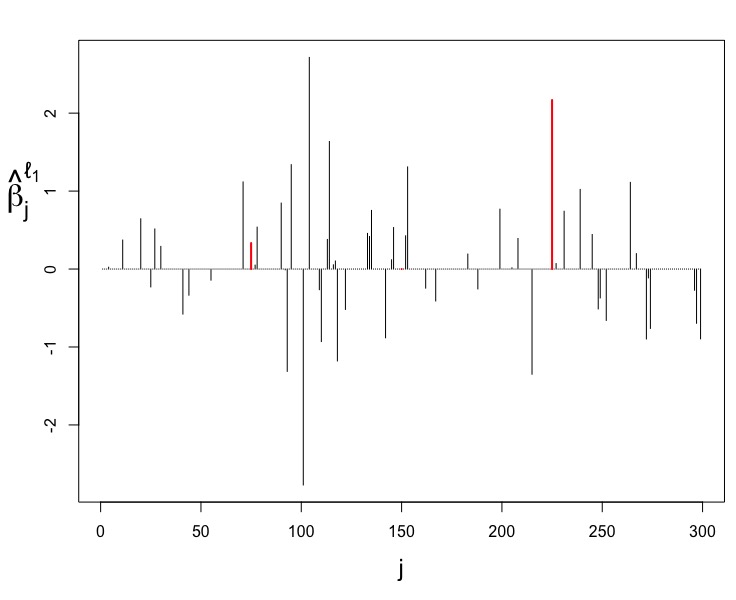}
		\caption{$(q, M) = (n,1)$\label{seg_M1}}
	\end{subfigure}
\quad
	\begin{subfigure}[t]{7cm}
		\centering
		\includegraphics[scale=0.27]{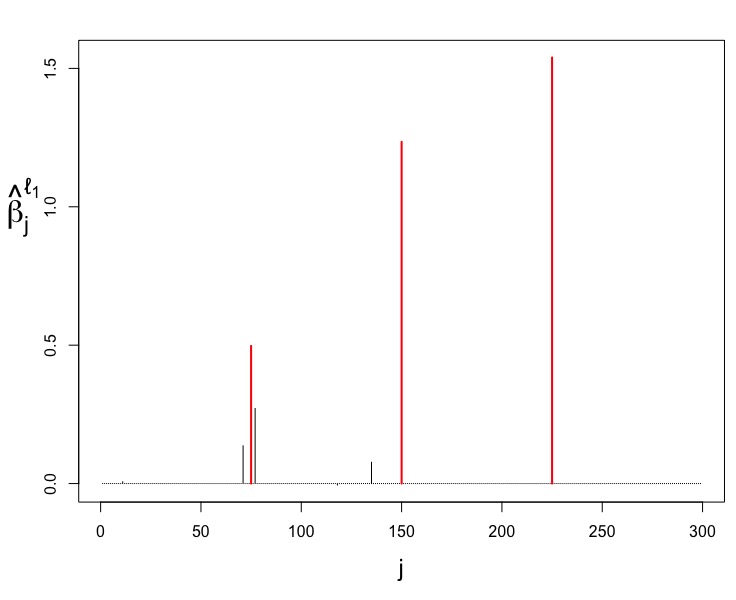}
		\caption{$(q, M) = (n, 30)$\label{seg_M30}}
	\end{subfigure}
	\caption{Coefficients of $\betahatlone$ when Lasso-Zero is tuned with different parameters $q$ and $M$, illustrating the effect of using no (i), one (ii) or several (iii) noise dictionaries. The red bars correpond to $\betahatlone_j$s with $j \in S^0.$}\label{TVexample}
\end{figure}

To illustrate the use of noise dictionaries, we consider an example of the simulation setting~(\ref{segX}) described in Section~\ref{settings}. Here $n=300$, $p=n-1,$ the true vector $\beta^0$ has support $S^0 = \{75, 150,  225\}$ and $\beta^0_j = 2.5$ for every $j \in S^0.$ Figure~\ref{TVexample} shows $\betahatlone$ (step \ref{lass0_before_thresh} in Definition \ref{deflass0}, that is, Lasso-Zero before being thresholded) for different values of $q$ and $M.$ The entries of~$\betahatlone_j$ for $j \in S^0$ are indicated in red. If $(q, M)=(0, 1)$ (Figure~\ref{seg_BP}), no noise dictionary is used and $\betahatlone$ is simply the solution to BP~(\ref{basis_pursuit}); the three largest coefficients in absolute value are not the ones indexed by~$S^0$, therefore exact support recovery after thresholding is impossible. In Figures~\ref{seg_M1} and~\ref{seg_M30}, noise dictionaries of size $q=n$ are used. Figure~\ref{seg_M1} illustrates the effect of using a single one ($M=1$): sparsity is induced in $\betahatlone$ since the columns of $G^{(1)}$ can also be used to fit $y$, but exact support recovery is still impossible. In Figure~\ref{seg_M30} with $M=30$ noise dictionaries, taking the median of the obtained $\betahat^{(k)}_j$, $k=1, \ldots, M$,  has the effect of setting almost all truly null coefficients to zero. With an appropriate threshold  (say $\tau = 0.4$ here), the estimated support equals~$S^0.$

Model selection is sometimes compared to finding needles in a haystack. By adding columns to the matrix $X,$ we add straw to the haystack and at first glance make the task of finding needles even more difficult. But the extra columns are \emph{known} to be straw, whereas in the original problem it is not known a priori which variables are the needles to find and which ones are the hay. Additionally, the associated noise coefficients~$\gammahat^{(k)}$ have the advantage of carrying some information about the noise level $\sigma$. This fact is exploited in Section~\ref{qut} to get a pivotal statistic allowing to select the threshold $\tau$ when $\sigma$ is unknown.

\section{Selecting the threshold} \label{thresh_selection}
Finding an appropriate threshold is crucial for model selection. A good model selection procedure should have high power, typically measured by the true positive rate ($\operatorname{TPR}$):
\begin{equation}\label{defTPR}
\operatorname{TPR} =  \E\left(\frac{\abs{S^0 \cap \Shat}}{\abs{S^0}} \right),
\end{equation}
and a controlled type I error. Common measures for the type I error include the family-wise error rate ($\operatorname{FWER}$) and the false discovery rate ($\operatorname{FDR}$), defined by
\begin{align} 
&\operatorname{FWER} = \p(\abs{\Shat \cap \comp{S^0}} > 0), \label{defFWER}
\\
&\operatorname{FDR} = \E \left( \frac{\abs{\Shat \cap \comp{S^0}}}{\abs{\Shat} \vee 1} \right). \label{defFDR}
\end{align}
They correspond to the probability of making at least one false discovery and the expected proportion of false discoveries among all discoveries, respectively. Since $\operatorname{FDR} \leq \operatorname{FWER}$, any procedure controlling the FWER at level $\alpha$ also controls the FDR. 
%

The threshold selection procedure suggested below is based on quantile universal thresholding (QUT), a general methodology  \citep{giacobino2017} which controls the $\operatorname{FWER}$ in the weak sense.
We also  exploit the noise coefficients $\gamma^{(k)}$ to yield a data dependent QUT that does not require any knowledge or preliminary estimation of the noise level $\sigma$. 

\subsection{Quantile universal thresholding} \label{qut}
Standard methods for selecting tuning parameters like cross-validation and other information criteria like SURE \citep{stein1981}, BIC \citep{schwarz1978}, AIC \citep{akaike1998}, are driven by prediction goals. Quantile universal thresholding, on the contrary, is directed towards model selection. It consists in controlling the probability of correctly recognizing the null model when $\beta^0 = 0$. In other words, if $\betahat_\lambda$ is a regularized estimator tuned by a parameter~$\lambda$, the $\alpha$-quantile universal threshold $\lambda_\alpha$ is the value of $\lambda$ for which $\p(\betahat_\lambda(\epsilon) = 0) = 1 - \alpha.$
 
By Definition~\ref{deflass0}, $\betahat^{\rm{lass0}}_\tau = 0$ if and only if $\tau \geq \norm{\betahatlone}_\infty,$ hence the $\alpha$-quantile universal threshold for Lasso-Zero is given by
\begin{equation} \label{lass0_QUT}
\tau_\alpha^{\operatorname{QUT}} = F^{-1}_T(1-\alpha),
\end{equation}
where $F_T$ denotes the c.d.f.~of 
\begin{equation} \label{statT}
T:=\norm{\betahatlone(\epsilon)}_\infty.
\end{equation}
If the value of $\sigma$ is known, the threshold~(\ref{lass0_QUT}) can be estimated by Monte Carlo simulations. However in practice the noise level is rarely known. In such cases it is often estimated beforehand, for example using a refitted cross-validation procedure \citep{fan2012} or based on the Lasso tuned by cross-validation like in \citet{reid2016}. It is rather suggested here to pivotize the statistic $\norm{\betahatlone(\epsilon)}_\infty$ using the noise coefficients $\gammahat^{(k)}.$ We consider
\begin{equation} \label{piv_stat}
P := \frac{\norm{\betahatlone(\epsilon)}_\infty}{s(\epsilon)},
\end{equation}
with 
$
s(y) := \operatorname{MAD}^{\neq 0}(\gammahat^{(1)}(y), \ldots, \gammahat^{(M)}(y)),
$
where $\operatorname{MAD}^{\neq 0}(v) := \operatorname{MAD}\{v_i \ \mid \ v_i \neq 0 \}.$ 
The numerator and denominator in~(\ref{piv_stat}) being both equivariant with respect to $\sigma$, the distribution $F_P$  does not depend on $\sigma$. 
It is then straightforward to check that choosing the data-dependent quantile universal threshold \begin{equation} \label{lass0_pivQUT}
\tau_\alpha^{\operatorname{QUT}}(y) := s(y) F_P^{-1}(1-\alpha)
\end{equation}
ensures $\p(\betahat^{\rm{lass0}}_\tau(y) = 0 \ \mid \tau=\tau_\alpha^{\operatorname{QUT}}(y)) = 1-\alpha$ 
under $\beta^0 = 0$. 

\subsection{Approximation of QUT based on a GEV fit} \label{GEV}
Since the distribution of $P$~(\ref{piv_stat}) is unknown, estimating the quantile universal threshold~(\ref{lass0_pivQUT}) requires to perform a Monte Carlo simulation to obtain an estimate $\hat{q}_\alpha^{\textrm{MC}}$ of the upper $\alpha$-quantile $q_\alpha := F_P^{-1}(1-\alpha)$. A large number of Monte Carlo replications are needed for good estimation, especially for small values of $\alpha.$ This can be expensive for large design matrices since the extended BP problem~(\ref{strawlass0}) needs to be solved $M$ times for a single realization of $P$. Parallel computing allows a great gain of time if one has the computational resources. But if one can afford only a small number of replications, a parametric fit could improve the approximation. Indeed $P$ is the maximum of $p$ random variables, hence could be modelled by a GEV distribution \citep{embrechts1997} which can be fitted using a relatively small number of realizations of $P.$ Once the GEV distribution is fitted, its upper $\alpha$-quantile $\hat{q}_\alpha^{\textrm{GEV}}$ can be used as an estimate of $q_\alpha.$

\newcommand{\GEVheightone}{5cm}
\newcommand{\GEVheight}{4cm}
\newcommand{\GEVscale}{0.2}
\renewcommand{\thesubfigure}{(\alph{subfigure})}
\begin{figure}[!ht]
	\centering
	\begin{subfigure}[t][\GEVheightone]{6cm}
	\centering
	\includegraphics[scale=\GEVscale]{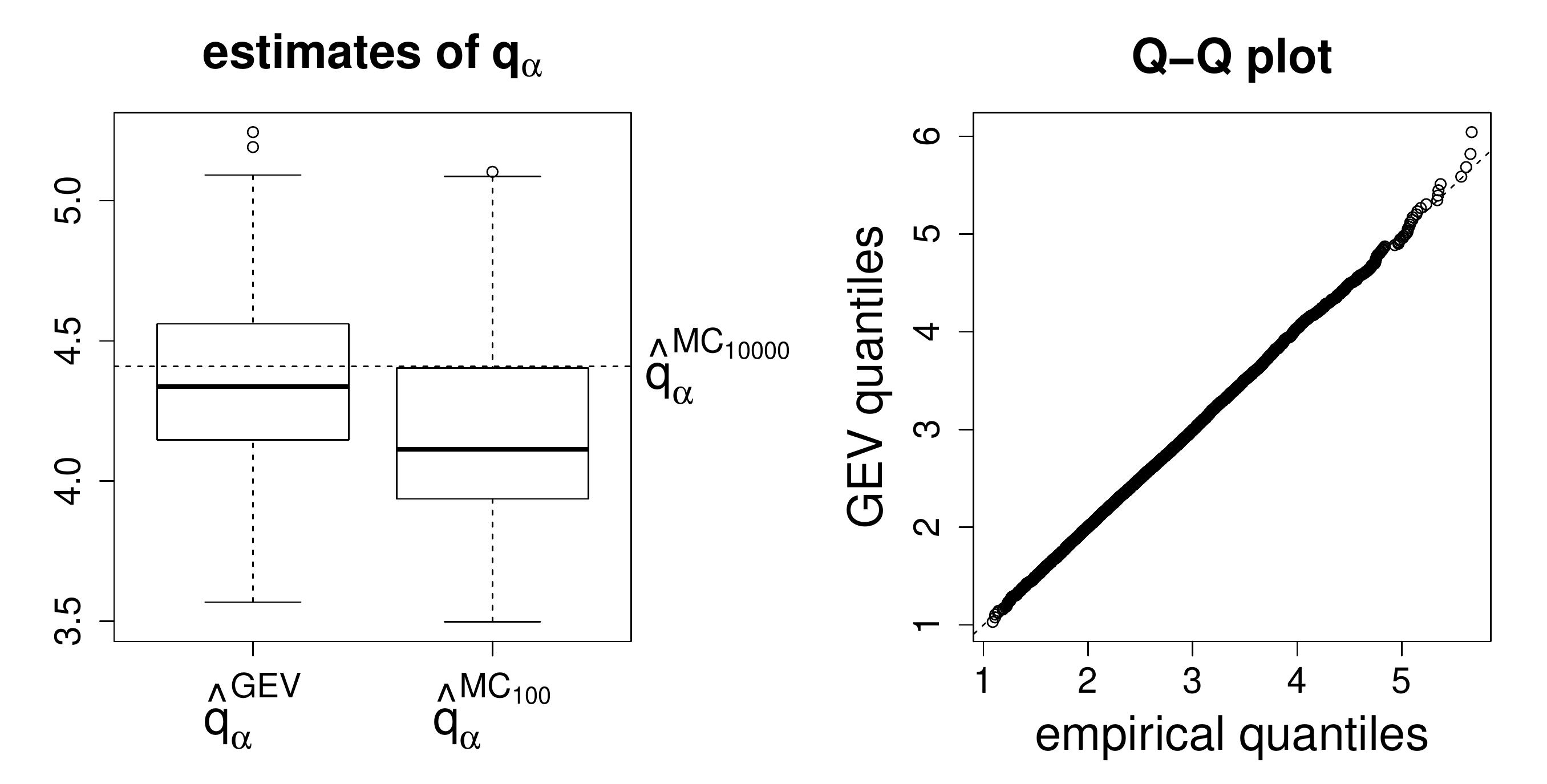}
	\caption{}
	\end{subfigure}
	\qquad
	\begin{subfigure}[t][\GEVheightone]{6cm}
	\centering
	\includegraphics[scale=\GEVscale]{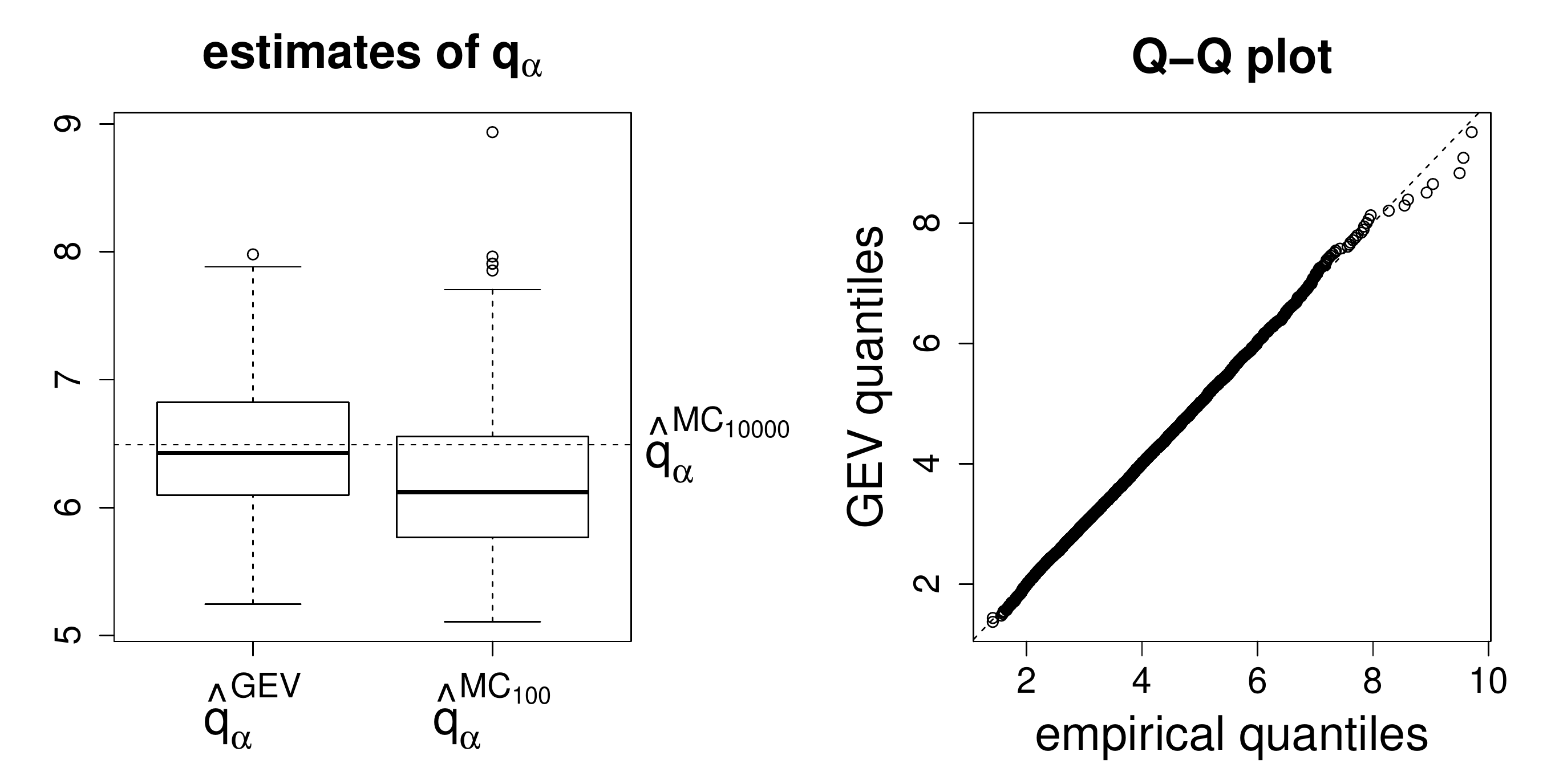}
		\caption{}
	\end{subfigure}
	\begin{subfigure}[t][\GEVheight]{6cm}
	\centering
	\includegraphics[scale=\GEVscale]{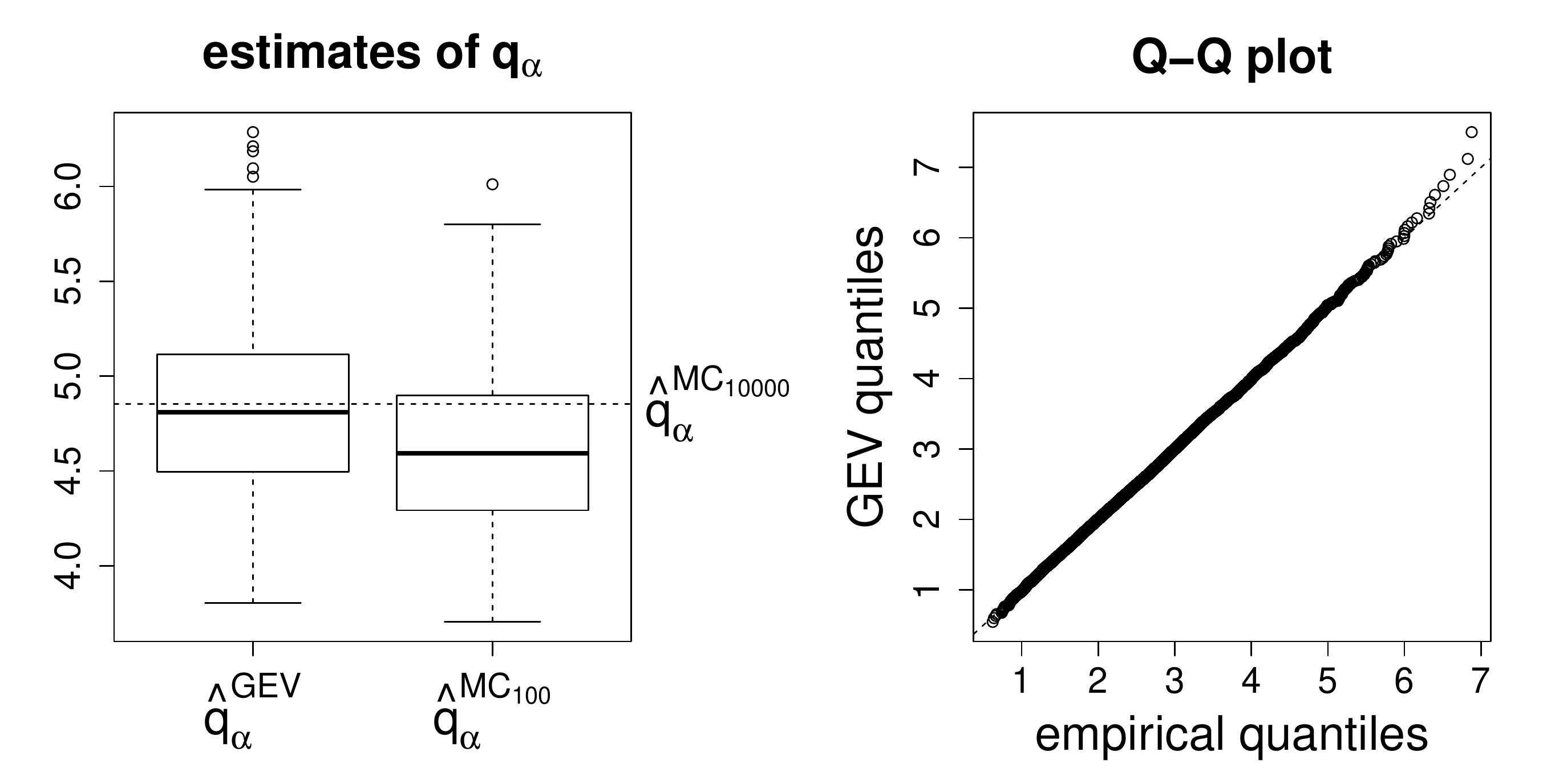}
		\caption{}
	\end{subfigure}
	\qquad
	\begin{subfigure}[t][\GEVheight]{6cm}
	\centering
	\includegraphics[scale=\GEVscale]{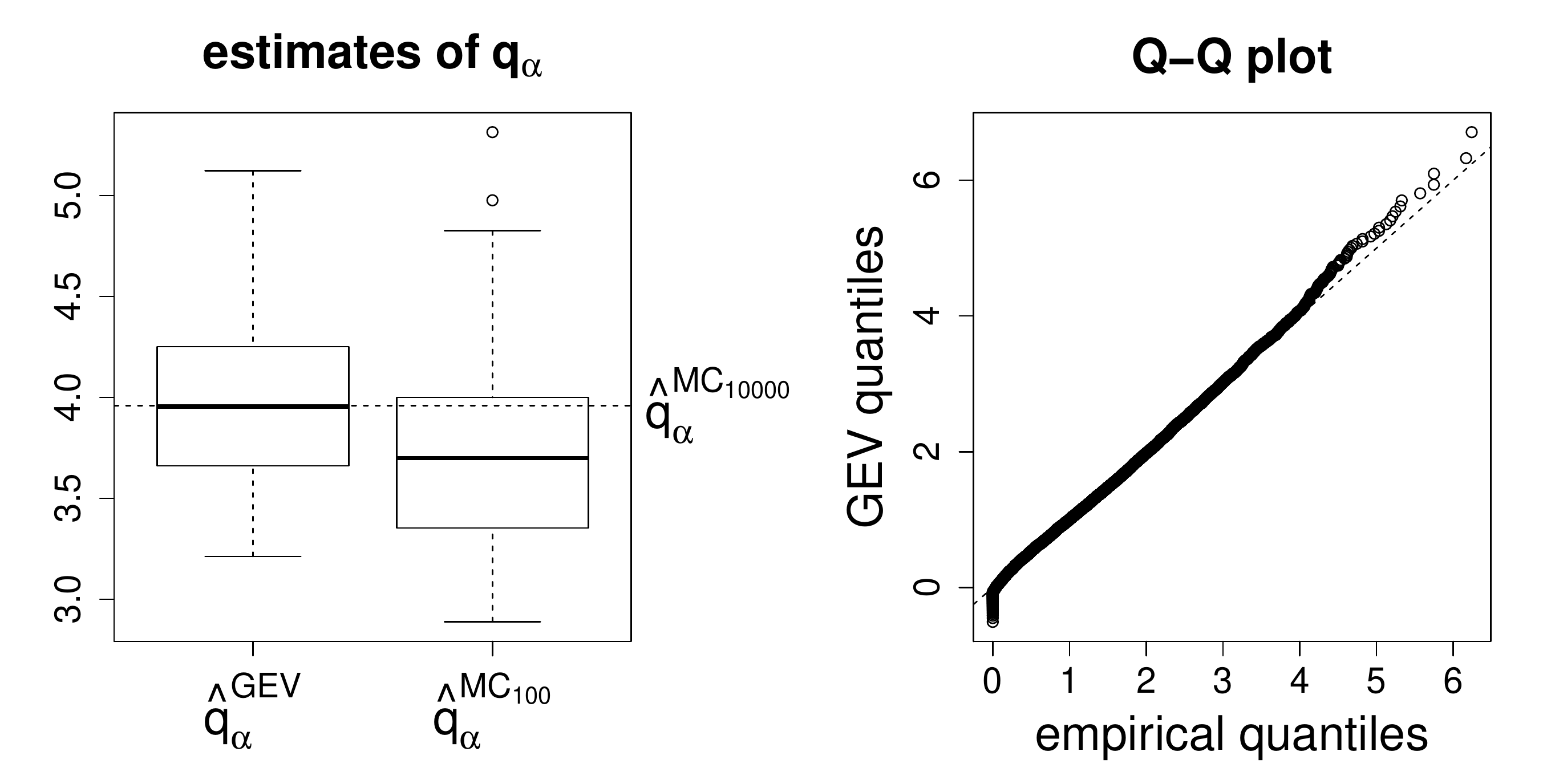}
	\caption{}
	\end{subfigure}
	
	\caption{Approximation of $q_\alpha := F_P^{-1}(1-\alpha)$ in the four simulation settings~(\ref{smallGauss}),~(\ref{wideGauss}),~(\ref{ribo}),~(\ref{segX}) described in Section~\ref{settings}. Left plots: for 100 Monte Carlo replications, boxplots of 200 estimates $\hat{q}^{\textrm{GEV}}_\alpha$ based on a GEV fit and 200 empirical estimates $\hat{q}^{\textrm{MC}_{100}}_\alpha.$ The empirical estimate based on $10,000$ replications is represented by the horizontal line. Right plots: Q-Q plot of the GEV fit against the empirical quantiles of the $10,000$ sample. \label{GEVplots}}
\end{figure}

In our numerical experiments of Section~\ref{simus}, $q_\alpha$ was estimated by a Monte Carlo simulation with $10,000$ replications. To investigate the success of the GEV approach when a smaller number of replications are made and $\alpha$ is small (we consider $\alpha = 0.01$ in this section), subsamples of size $100$ were repeatedly drawn among our $10,000$ realizations of~$P.$ For each subsample we computed the upper $\alpha$-quantile $\hat{q}^{\text{GEV}}_\alpha$ of the GEV distribution fitted by maximum likelihood and the empirical upper $\alpha$-quantile $\hat{q}^{\text{MC}_{100}}_\alpha$ obtained with this subsample. If the law of $P$ can be well approximated by a GEV distribution, one can expect~$\hat{q}^{\text{GEV}}_\alpha$ to be closer to $q_\alpha$ than $\hat{q}^{\text{MC}_{100}}_\alpha$ since the tail of the distribution cannot be well approximated with a small sample and the empirical quantile is biased downwards. Of course the exact~$q_\alpha$ value is unknown, so instead we use the empirical quantile $\hat{q}^{\text{MC}_{10000}}_\alpha$ computed with all of our $10,000$ realizations as a reference. Figure~\ref{GEVplots} shows the result of this comparison in the four simulation settings described in Section~\ref{settings}. The left plots of all subfigures show the boxplots for $\hat{q}^{\text{GEV}}_\alpha$ and $\hat{q}^{\text{MC}_{100}}_\alpha,$ compared to the Monte Carlo estimate $\hat{q}^{\text{MC}_{10000}}_\alpha$ indicated by the horizontal line. The right  Q-Q plots of the 
GEV fit against the empirical quantiles of the $10,000$ realizations  show that the GEV distribution approximates the law of $P$ well.

\section{Numerical experiments} \label{simus}
This section presents some simulations comparing Lasso-Zero to other variable selection procedures, looking at the $\operatorname{FDR}$~(\ref{defFDR}), $\operatorname{TPR}$~(\ref{defTPR}) and probability of exact support recovery. 

\subsection{Simulation settings} \label{settings}
We describe here the four different design matrices used in our numerical experiments.
\begin{enumerate}[(a)]
\item \label{smallGauss} \textit{i.i.d.~Gaussian design}: here $n=100$, $p=200$ and the entries $X_{ij}$ are i.i.d.~$N(0, 1).$ The columns of $X$ are then mean-centered and standardized so that $\sum_{i=1}^n X_{ij} = 0$ and $\frac{1}{n-1}\sum_{i=1}^n X_{ij}^2 = 1$ for every $j$. The matrix is generated once for the whole simulation. The amplitude of the nonzero coefficients is set to $0.75$ and their signs are random.
\item \label{wideGauss} \textit{wider i.i.d.~Gaussian design}: identical to~(\ref{smallGauss}) except $p=1000.$
\item \label{ribo} \textit{wide design with correlated variables (real dataset):} the $X$ matrix from the \texttt{riboflavin} dataset \citep{buhlmann2014} measuring the expression levels of $p=4088$ genes for $n=71$ Bacillus subtilis bacteria is characterized by a large ratio $p/n$ and  high correlations between variables. The matrix is mean-centered and standardized, 
and the nonzero coefficients are set to  $2$ with random signs. We also considered wide Gaussian matrices with Toeplitz covariance~$\Sigma_{ij} = 0.9^{\abs{i-j}}$, and obtained similar results.
\item \label{segX} \textit{highly correlated columns (segmentation problem):} here $n=300$ and the considered matrix is $X \in \R^{n \times (n-1)}$ defined by $X_{ij} = \bm{1}_{\{i > j\}}.$ It is related to the problem of recovering the jumps of a piecewise constant signal $f^0$ of size $n$, given a noisy observation $y= f^0 + \epsilon$. Indeed, this problem falls within the sparse linear framework by considering the vector $\beta^0 \in \R^{n-1}$ of all successive differences $\beta^0_j = f^0_{j+1} - f^0_j$ and writing  $
f^0 = f^0_1 \bm{1}_n + X \beta^0$,
where $\bm{1}_n \in \R^n$ is the vector of ones. Then recovering the discontinuity points of $f^0$ is equivalent to recovering the support of $\beta^0$. The discontinuities have amplitude $3$ and random signs. To omit the intercept $f^0_1$, both the observation $y$ and the matrix $X$ are mean-centered, so $\beta^0$ is estimated from the response $\tilde{y} = (I -P_{\bm{1}_n})y$ and the matrix $\tilde{X} = (I -P_{\bm{1}_n})X,$ where $P_{\bm{1}_n}$ is the orthogonal projection matrix onto the subspace generated by $\bm{1}_n.$ The Lasso is equivalent to total variation  \citep{rudin1992}
\[
\hat{f}^{\rm{TV}} = \argmin_{f \in \R^n} \frac{1}{2}\norm{y - f}_2^2 + \lambda \sum_{j=1}^{n-1} \abs{f_{j+1} - f_j},
\]
in the sense that $\hat{f}^{\rm{TV}} = \hat{f}^{\rm{TV}}_1 + X \betahat^{\rm{lasso}}_\lambda.$
\end{enumerate}

In all the simulations the parameters $(q, M)$ of Lasso-Zero are set to $q=n$ and $M=30$ and the threshold is selected as in~(\ref{lass0_pivQUT}) with $\alpha = 0.05$. The noise level is set to $\sigma = 1$ and the support $S^0$ is chosen randomly for each realization, except in setting~(\ref{segX}) where the discontinuity points are always equidistant. The sparsity index $s^0 := \abs{S^0}$ starts from~$0$ (null model) and is increased until exact support recovery becomes almost impossible.

We make here an important clarification regarding the standardization of the matrices $X$ and $G^{(k)}.$ It is desirable that the vectors of the noise dictionaries are on the same scale as the columns of $X$. So when $X$ is standardized so that the empirical standard deviation of each column $X_j$ equals 1, the same is done on the noise dictionaries $G^{(k)}.$ In setting~(\ref{segX}), the matrix $\tilde{X}$ is not standardized in order to preserve the particular structure of the problem. In this case, all columns of $G^{(k)}$ are standardized to have the same $\ell_2$-norm in such a way that the upper $\alpha$-quantiles of the statistics $\norm{\tilde{X}^T \epsilon}_\infty$ and $\norm{(G^{(k)})^T \epsilon}_\infty$ coincide.

\subsection{Choice of competitors} \label{competitors}

In all simulation settings, we computed the Lasso solution tuned by GIC \citep{fan2013}. Additionnally to the Lasso, we considered  SCAD \citep{fan2004} tuned by the GIC criterion (computed with the \texttt{R} package \texttt{ncvreg}~\citep{breheny2011}), which uses a nonconvex penalty improving the model selection performance over the Lasso. 

We also selected two procedures aiming at controlling the FDR, namely SLOPE~\citep{bogdan2015} and the knockoff filter \citep{barber2015, candes2018}, computed with the \texttt{R} packages \texttt{SLOPE}~\citep{RSLOPE} and \texttt{knockoff}~\citep{Rknockoff} respectively. SLOPE provably controls the FDR in the orthogonal case only, and the authors proposed a heuristic sequence of tuning parameters that are appropriate for independent Gaussian designs. As it is not known how to tune it for other design matrices, SLOPE was computed in settings~(\ref{smallGauss}) and~(\ref{wideGauss}) only. The knockoff filter was not tuned to ``knockoff+", which provably controls the FDR, since it is very conservative and has no power in our simulation settings. So we chose the option \texttt{offset=0} so that the knockoff filter only controls a modified version of the FDR and leads to less conservative results. Model-X knockoffs \citep{candes2018} were used in settings~(\ref{smallGauss}) and~(\ref{wideGauss}), together with the statistic \texttt{stat.glmnet\_coefdiff} (based on the comparison of the coefficients to their respective knockoffs at the Lasso solution tuned by cross-validation), wheareas the knockoff filter for fixed (low-dimensional) design \citep{barber2015} was used in setting~(\ref{segX}) together with the statistic \texttt{stat.lasso\_lambdasmax}. It was excluded in setting~(\ref{ribo}) due to computational time. Both SLOPE's and knockoffs' FDR target was set to $0.05$.

Since SLOPE and knockoffs do not actually control the FDR in our simulations, we decided to include stability selection~\citep{meinshausen2010} (computed with the package \texttt{stabs}~\citep{Rstabs}
with \texttt{cutoff=0.6} and \texttt{PFER=1} for comparable FDR), another conservative procedure designed to control the FWER.

In order to compare our results to state-of-the-art methods in the segmentation problem~(\ref{segX}), we also ran wild binary segmentation \citep{fryzlewicz2014}, which was specifically introduced for segmentation, using the \texttt{R} package \texttt{wbs} \citep{Rwbs}.

Finally, for estimators requiring an estimation of the noise level, namely Lasso, SCAD (both tuned by GIC) and SLOPE, the value of $\sigma^2$ was estimated based on the residual sum of squares for the Lasso tuned by cross-validation, as in \citet{reid2016} (the current implementation of SLOPE to estimate $\sigma$ ran into occasional convergence problems under our simulation regimes). In the segmentation problem~(\ref{segX}), the estimation of $\sigma$ was based on the $\operatorname{MAD}$ of the successive differences of $y$.

\subsection{Results}
\newcommand{\fdrscale}{0.3}
\newcommand{\figwidth}{6cm}
\newcommand{\fdrheightone}{7cm}
\newcommand{\fdrheight}{6cm}
\renewcommand{\thesubfigure}{(\alph{subfigure})}
\begin{figure}[!ht]
	\centering
	\begin{subfigure}[t][\fdrheightone]{\figwidth}
	\centering
	\includegraphics[scale=\fdrscale]{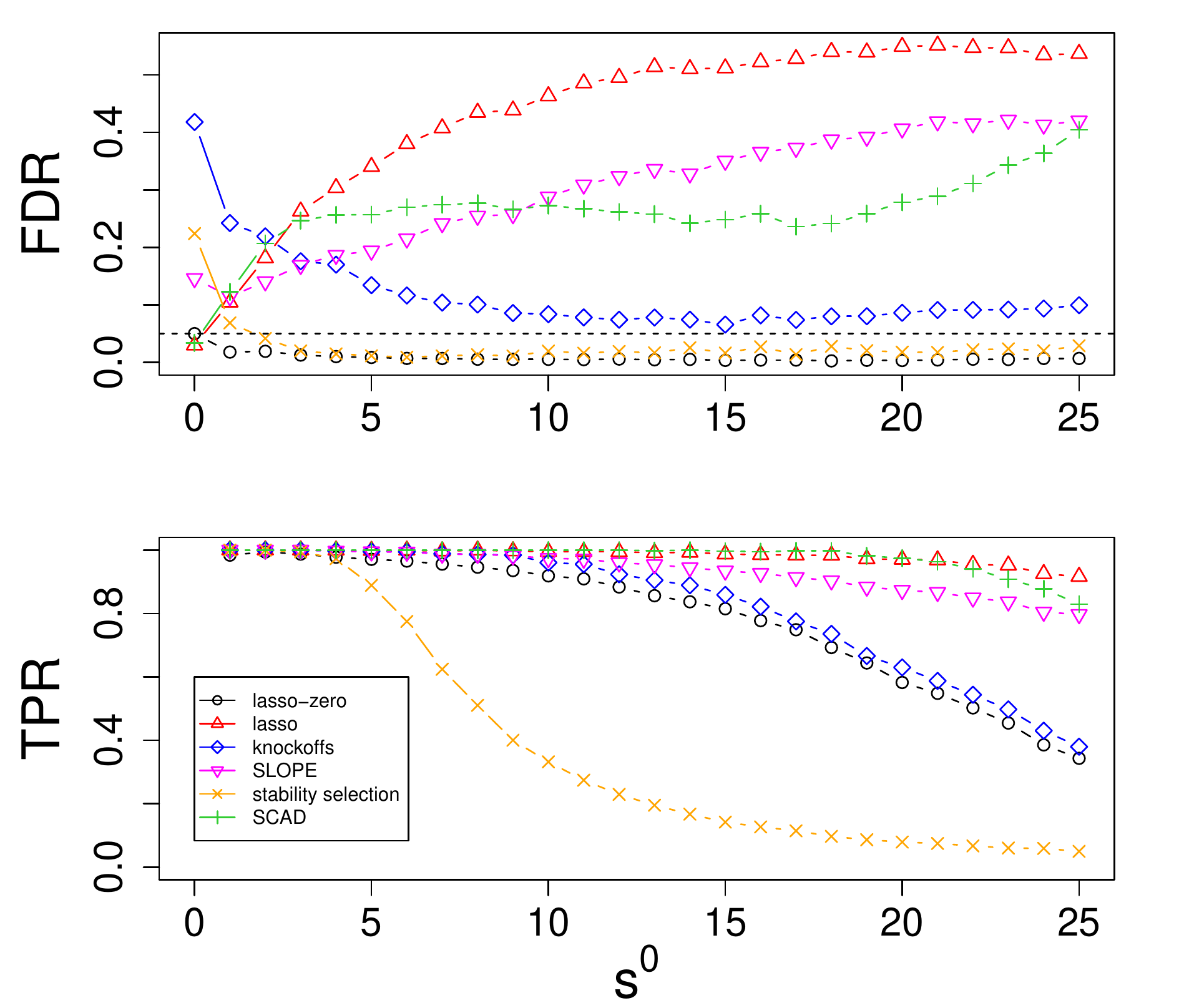}
	\caption{}
	\end{subfigure}
	\qquad
	\begin{subfigure}[t][\fdrheightone]{\figwidth}
	\centering
	\includegraphics[scale=\fdrscale]{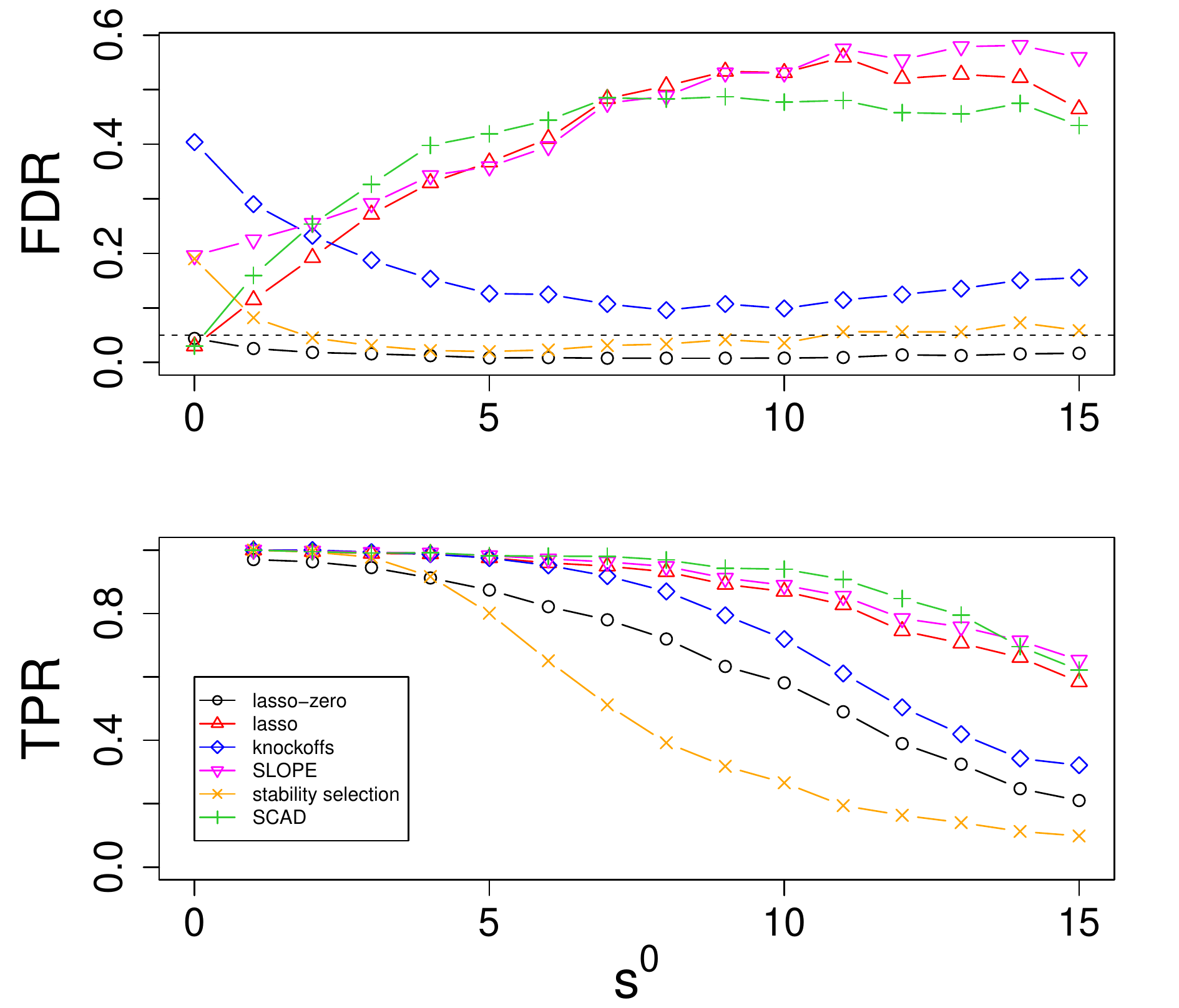}
		\caption{}
	\end{subfigure}
	\begin{subfigure}[t][\fdrheight]{\figwidth}
	\centering
	\includegraphics[scale=\fdrscale]{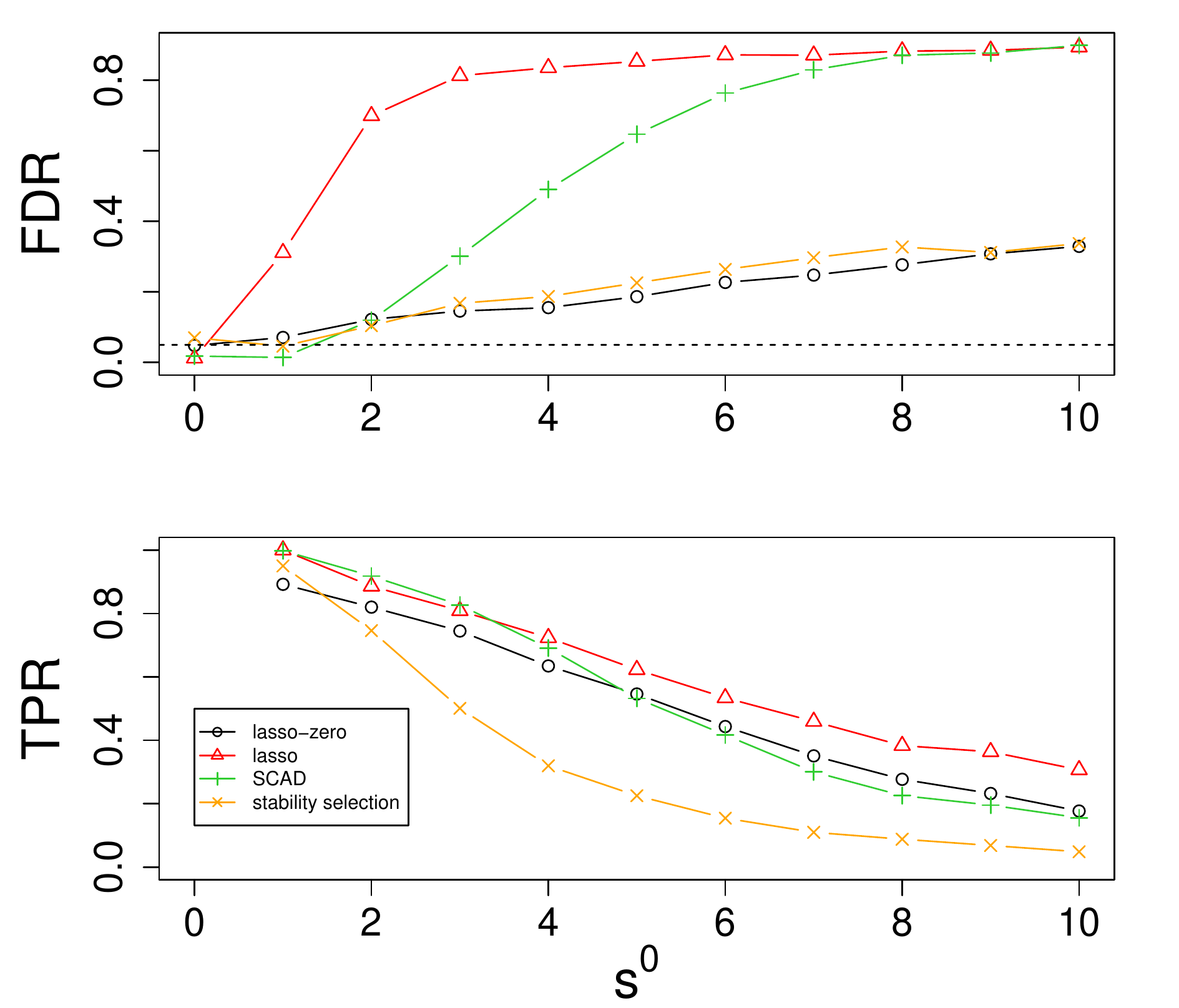}
		\caption{}
	\end{subfigure}
	\qquad
	\begin{subfigure}[t][\fdrheight]{\figwidth}
	\centering
	\includegraphics[scale=\fdrscale]{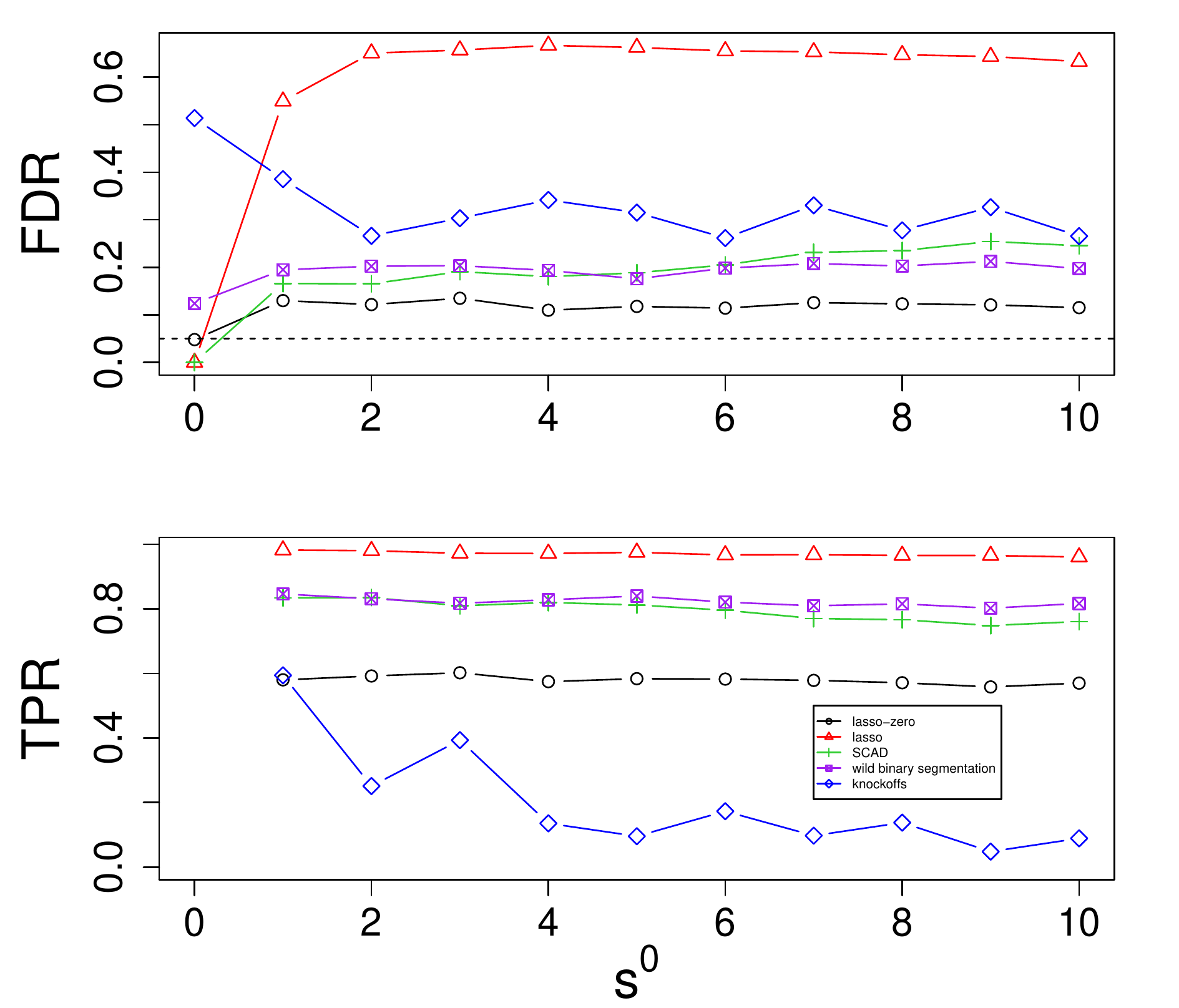}
	\caption{}
	\end{subfigure}
	
	\caption{FDR and TPR 
	 for settings~(\ref{smallGauss}),~(\ref{wideGauss}),~(\ref{ribo}),~(\ref{segX}) of Section~\ref{settings}.\label{tprfdr_graph}}
\end{figure}

Figure~\ref{tprfdr_graph} shows the $\operatorname{FDR}$~(\ref{defFDR}) and $\operatorname{TPR}$~(\ref{defTPR}) after $500$ replications as the model size $s^0$ grows. Lasso-Zero's $\operatorname{FDR}$ is the lowest in all simulation settings. By construction of the quantile universal threshold, it is equal to $\alpha = 0.05$ when $s^0 = 0.$ Interestingly, we observe that in the i.i.d.~Gaussian settings~(\ref{smallGauss}) and~(\ref{wideGauss}) the $\operatorname{FDR}$ remains controlled at level $\alpha$ even as $s^0$ increases. This phenomenon no longer holds in the correlated cases~(\ref{ribo}) and (\ref{segX}).

Compared to stability selection (in settings~(\ref{smallGauss}), (\ref{wideGauss}) and (\ref{ribo})), which achieves a comparable $\operatorname{FDR}$, Lasso-Zero is much more powerful, as shown by a higher $\operatorname{TPR}$ value. Lasso, SLOPE and SCAD are more powerful than Lasso-Zero, but at the price of a much higher $\operatorname{FDR}.$ So compared to other estimators, Lasso-Zero achieves an excellent trade-off between low $\operatorname{FDR}$ and high $\operatorname{TPR}$, except in the segmentation setting~(\ref{segX}) where Lasso-Zero's power is clearly lower than Lasso, SCAD and wild binary segmentation, however without improving much the $\operatorname{FDR}.$ In this case, Lasso-Zero's performance remains however better than the knockoff filter, which presents a high $\operatorname{FDR}$ and low $\operatorname{TPR}.$

\newcommand{\suppwidth}{0.28}
\newcommand{\suppheightone}{6cm}
\newcommand{\suppheight}{5cm}
\renewcommand{\thesubfigure}{(\alph{subfigure})}
\begin{figure}[!ht]
	\centering
	\begin{subfigure}[t][\suppheightone]{6cm}
	\centering
	\includegraphics[scale=\suppwidth]{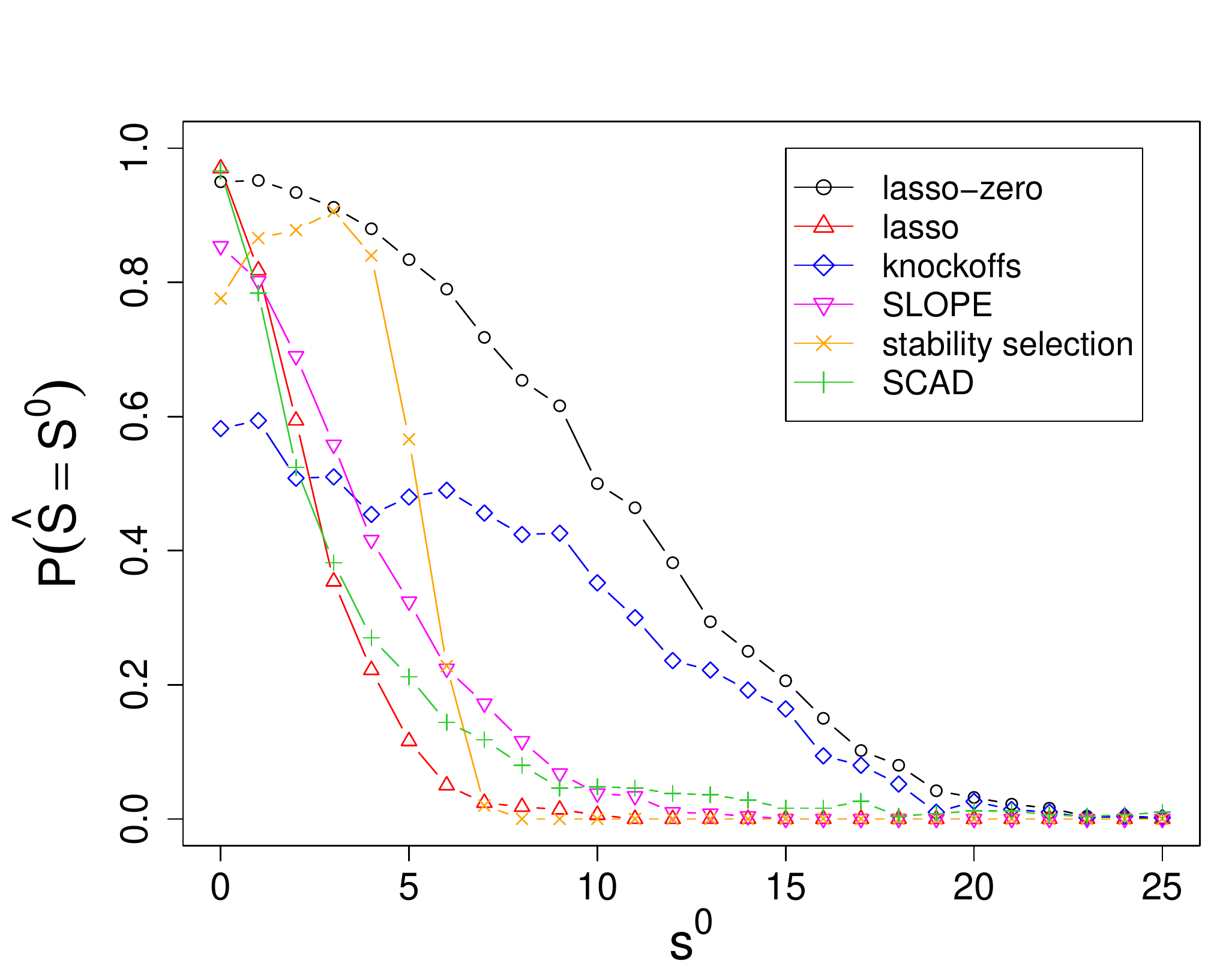}
	\caption{}
	\end{subfigure}
	\qquad
	\begin{subfigure}[t][\suppheightone]{6cm}
	\centering
	\includegraphics[scale=\suppwidth]{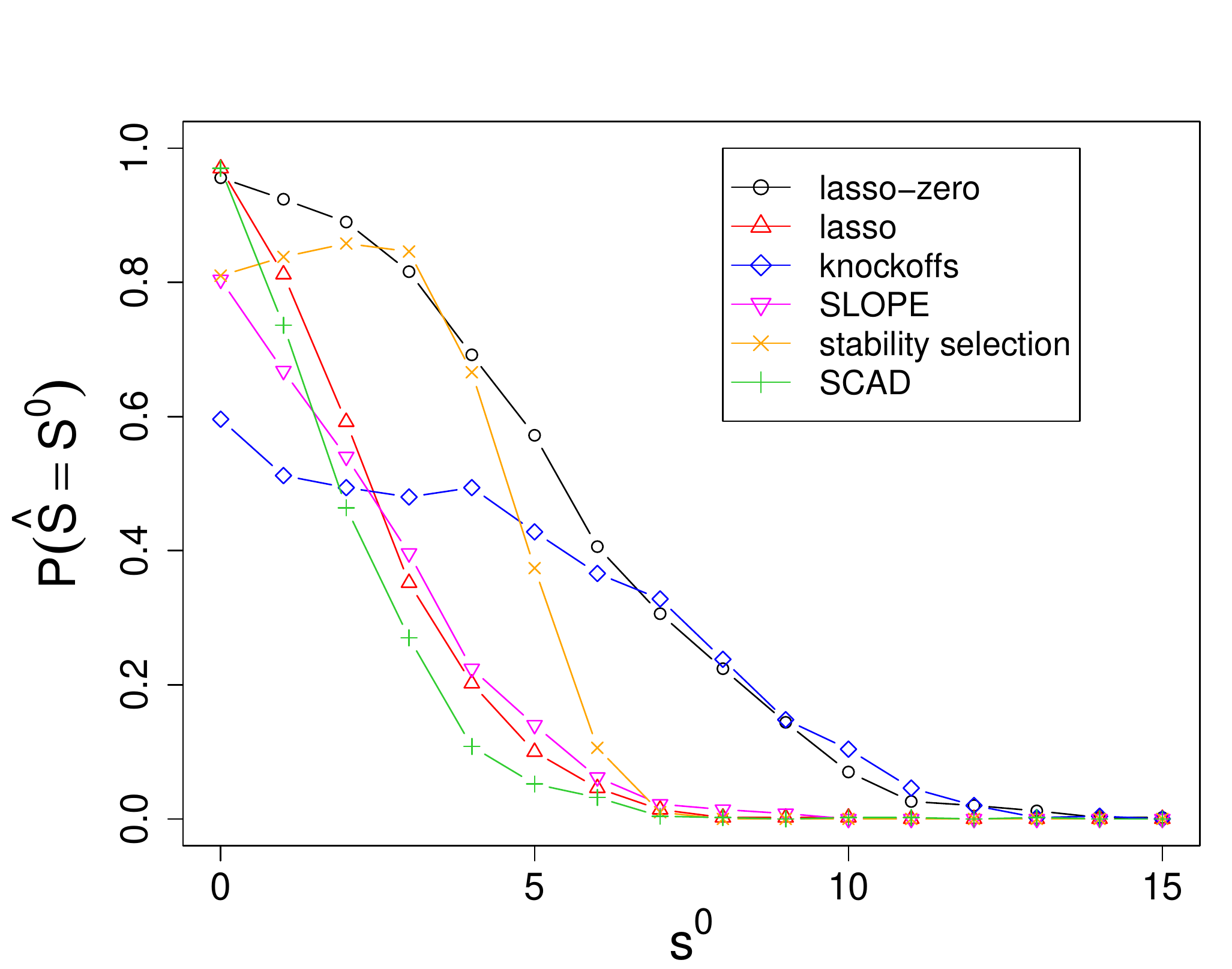}
		\caption{}
	\end{subfigure}
	\begin{subfigure}[t][\suppheight]{6cm}
	\centering
	\includegraphics[scale=\suppwidth]{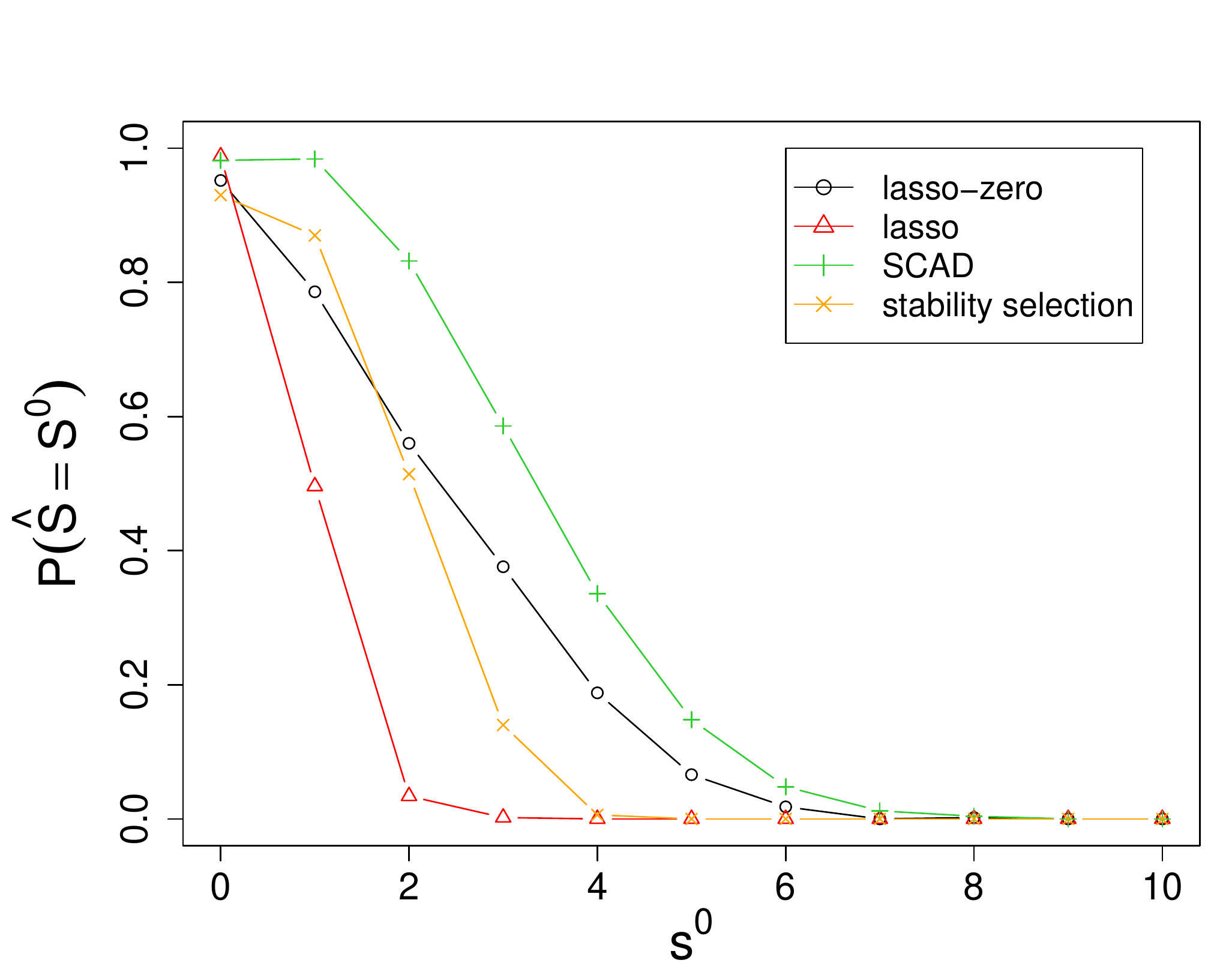}
		\caption{}
	\end{subfigure}
	\qquad
	\begin{subfigure}[t][\suppheight]{6cm}
	\centering
	\includegraphics[scale=\suppwidth]{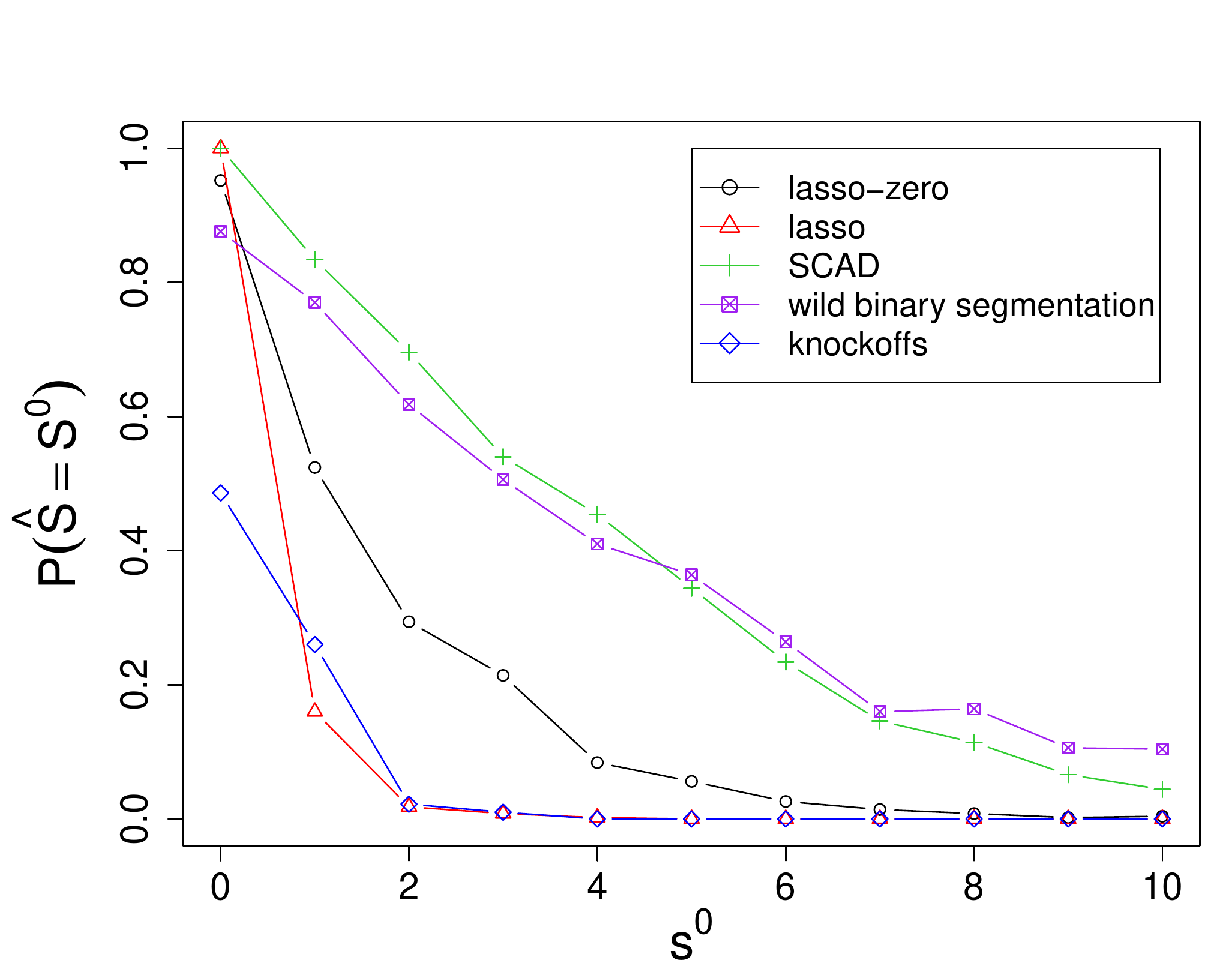}
	\caption{}
	\end{subfigure}
	
	\caption{Probability of exact support recovery for settings~(\ref{smallGauss}),~(\ref{wideGauss}),~(\ref{ribo}),~(\ref{segX}) of Section~\ref{settings}.\label{supp_graph}}
\end{figure}

Figure~\ref{supp_graph} shows the frequency of exact support recovery after $500$ replications. Lasso-Zero performs better than its competitors in the Gaussian settings~(\ref{smallGauss}) and (\ref{wideGauss}), and is outperformed only by SCAD in setting~(\ref{ribo}). In the segmentation setting~(\ref{segX}), Lasso-Zero represents a clear improvement over the Lasso and knockoffs, but does not perform as well as wild binary segmentation and SCAD. This is all consistent with the results shown in Figure~\ref{tprfdr_graph}, since Lasso-Zero's $\operatorname{TPR}$-$\operatorname{FDR}$ trade-off is particularly good in~(\ref{smallGauss}) and (\ref{wideGauss}). Its slight $\operatorname{FDR}$ improvement in setting~(\ref{segX}) is made at the price of an important loss of power.

%
%
%

\section{Theoretical results} \label{theory}

A particular case of Lasso-Zero is when $q=0$ and $M=1$, i.e. when no random column is added to $X$. Then Lasso-Zero simply thresholds the solution to BP~(\ref{basis_pursuit}). This particular case is at the core of the Lasso-Zero idea and the results obtained in this case support our methodology. We focus here on this special case and throughout this section ``Lasso-Zero" always refers to $\betahat^{\rm{lass0}}_\tau = \betahat^{\operatorname{lass0}(0, 1)}_\tau,$ so
$
\betahat^{\rm{lass0}}_\tau = \eta_\tau(\betahat^{\ell_1}),
$
where $\betahat^{\ell_1}$ is the BP solution.


\subsection{Analysis under the stable null space property} \label{subsection:analysis_SNSP}
Our main finding is that Lasso-Zero requires a weaker condition on $X$ and~$S^0$ than the Lasso for exact recovery of~$S^0.$ To see that, let $\betahat^{\ell_1}(\epsilon)$ denote the solution to
\begin{equation} \label{opt:BP_epsilon}
\begin{aligned}
& \min_{\beta \in \R^p} \norm{\beta}_1 \quad  \text{ s.t.}  \quad  \epsilon = X \beta,
\end{aligned}
\end{equation}
in other words $\betahatlone(\epsilon)$ is the BP solution obtained if we were to observe only the noise~$\epsilon.$ The smaller $\sigma$ is, the smaller the coefficients of~$\betahat^{\ell_1}(\epsilon)$ tend to be. Since $\epsilon = X \betahat^{\ell_1}(\epsilon),$ the linear model~(\ref{lin_model}) can be rewritten as
$
y = X (\beta^0 + \betahat^{\ell_1}(\epsilon))
$,
which we interpret as a noiseless problem in which it is desired to recover $\beta^0 + \betahat^{\ell_1}(\epsilon).$ If the perturbation $\betahat^{\ell_1}(\epsilon)$ is small we expect 
 that the BP solution $\betahat^{\ell_1}$ will be close to $\beta^0 + \betahat^{\ell_1}(\epsilon),$ so that after thresholding its coefficients at some level $\tau > 0$ only the components indexed by $j \in S^0$ remain nonzero. BP is efficient for recovering sparse vectors in the noiseless case. Although close to $\beta^0$, the vector $\beta^0 + \betahat^{\ell_1}(\epsilon)$ is less sparse. We therefore assume that $X$ and $S^0$ satisfy the \emph{stable null space property}, which ensures the stability of BP with respect to sparsity defect \citep{foucart2013}. The stable null space property is met with constant $\rho \in (0, 1)$ if
\begin{equation} \label{SNSP}
\norm{\beta_{S^0}}_1 \leq \rho \norm{\beta_{\comp{S^0}}}_1 \quad \text{for all } \beta \in \ker(X).
\end{equation}
Under the stable null space property, Lasso-Zero can recover the signs of $\beta^0$ exactly.

\begin{theorem} \label{main_thm}
Assume $\rank{X} = n$ and the stable null space property~(\ref{SNSP}) is met for $X$ and $S^0$ with constant $\rho \in (0, 1).$ Let $\beta^0_{\min} := \min\{\abs{\beta^0_j} \ \mid \ j \in S^0 \}.$ If
\begin{equation} \label{beta_min}
\beta^0_{\min} > C_{\rho} \norm{\betahatlone(\epsilon)}_1,
\end{equation}
where $C_{\rho} = \frac{2(3 + \rho)}{1 - \rho},$ then there exists a threshold $\tau > 0$ such that
\begin{equation} \label{sign_recovery}
\betahat_\tau^{\operatorname{lass0}(0,1)} \overset{\operatorname{s}}{=} \beta^0.
\end{equation}
\end{theorem}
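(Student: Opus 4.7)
The strategy is to compare the BP solution $\hat\beta^{\ell_1}$ (based on $y=X\beta^0+\epsilon$) to the shifted vector $\beta^0+\hat\beta^{\ell_1}(\epsilon)$, which is a natural candidate because it is feasible for BP: indeed $X(\beta^0+\hat\beta^{\ell_1}(\epsilon))=X\beta^0+\epsilon=y$. Feasibility yields the optimality inequality
\[
\|\hat\beta^{\ell_1}\|_1\;\le\;\|\beta^0+\hat\beta^{\ell_1}(\epsilon)\|_1.
\]
Setting $v:=\hat\beta^{\ell_1}-\beta^0-\hat\beta^{\ell_1}(\epsilon)$, one has $v\in\ker(X)$, and everything reduces to controlling $\|v\|_1$ via the stable null space property.

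My plan is to split the $\ell_1$ norms along $S^0$ and its complement. Using $\beta^0_{\comp{S^0}}=0$ and the reverse triangle inequality on each block, I would rewrite the optimality inequality as
\[
\|\beta^0_{S^0}+\hat\beta^{\ell_1}(\epsilon)_{S^0}\|_1+\|\hat\beta^{\ell_1}(\epsilon)_{\comp{S^0}}\|_1\;\ge\;\|\beta^0_{S^0}+\hat\beta^{\ell_1}(\epsilon)_{S^0}\|_1-\|v_{S^0}\|_1+\|v_{\comp{S^0}}\|_1-\|\hat\beta^{\ell_1}(\epsilon)_{\comp{S^0}}\|_1,
\]
which collapses to $\|v_{\comp{S^0}}\|_1\le\|v_{S^0}\|_1+2\|\hat\beta^{\ell_1}(\epsilon)_{\comp{S^0}}\|_1$. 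Plugging in the SNSP bound $\|v_{S^0}\|_1\le\rho\|v_{\comp{S^0}}\|_1$ gives
\[
\|v_{\comp{S^0}}\|_1\le\tfrac{2}{1-\rho}\|\hat\beta^{\ell_1}(\epsilon)_{\comp{S^0}}\|_1,\qquad \|v\|_1\le\tfrac{2(1+\rho)}{1-\rho}\|\hat\beta^{\ell_1}(\epsilon)\|_1.
\]

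From here, $\hat\beta^{\ell_1}-\beta^0=v+\hat\beta^{\ell_1}(\epsilon)$, so by the triangle inequality and $\|\cdot\|_\infty\le\|\cdot\|_1$,
\[
\|\hat\beta^{\ell_1}-\beta^0\|_\infty\;\le\;\|v\|_1+\|\hat\beta^{\ell_1}(\epsilon)\|_1\;\le\;\tfrac{3+\rho}{1-\rho}\|\hat\beta^{\ell_1}(\epsilon)\|_1\;=\;\tfrac{C_\rho}{2}\|\hat\beta^{\ell_1}(\epsilon)\|_1.
\]
Condition (\ref{beta_min}) therefore gives $\|\hat\beta^{\ell_1}-\beta^0\|_\infty<\beta^0_{\min}/2$, so any threshold
\[
\tau\;\in\;\bigl[\|\hat\beta^{\ell_1}-\beta^0\|_\infty,\;\beta^0_{\min}-\|\hat\beta^{\ell_1}-\beta^0\|_\infty\bigr)
\]
simultaneously zeroes out every $\hat\beta^{\ell_1}_j$ with $j\notin S^0$ and preserves the sign of $\hat\beta^{\ell_1}_j=\beta^0_j+(\hat\beta^{\ell_1}_j-\beta^0_j)$ for $j\in S^0$ (since the perturbation is strictly smaller than $|\beta^0_j|$), yielding (\ref{sign_recovery}).

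The only real subtlety is the bookkeeping in step two: correctly splitting the two $\ell_1$ norms along $S^0$ and $\comp{S^0}$, being careful about the directions of the triangle inequalities so as to recover the clean cone-type inequality on $v$, and verifying that the constants compose to $C_\rho=2(3+\rho)/(1-\rho)$ after the final passage to $\|\cdot\|_\infty$. The hypothesis $\rank(X)=n$ is used implicitly to guarantee that $\hat\beta^{\ell_1}(\epsilon)$ exists; everything else is a straightforward consequence of the stable null space property combined with the optimality of basis pursuit.
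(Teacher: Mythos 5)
Your proposal is correct and follows essentially the same route as the paper's proof: you compare $\hat\beta^{\ell_1}$ to the feasible vector $\beta^0+\hat\beta^{\ell_1}(\epsilon)$, derive the cone inequality $\norm{v_{\comp{S^0}}}_1\le\norm{v_{S^0}}_1+2\norm{\hat\beta^{\ell_1}(\epsilon)_{\comp{S^0}}}_1$, invoke the stable null space property to get $\norm{v}_1\le\tfrac{2(1+\rho)}{1-\rho}\norm{\hat\beta^{\ell_1}(\epsilon)}_1$, and conclude via an $\ell_\infty$ bound of $\tfrac{C_\rho}{2}\norm{\hat\beta^{\ell_1}(\epsilon)}_1$ on $\hat\beta^{\ell_1}-\beta^0$, exactly as in the paper's Lemma~\ref{aux_lemma} and subsequent argument. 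The only cosmetic differences are that you exhibit a whole interval of admissible thresholds rather than the single value $\tau=\tfrac{3+\rho}{1-\rho}\norm{\hat\beta^{\ell_1}(\epsilon)}_1$ chosen in the paper, and you pass to $\norm{\cdot}_\infty$ globally instead of coordinatewise.
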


The proof can be found in the appendix. Exact sign recovery~(\ref{sign_recovery}) clearly implies exact support recovery ($\Shat^{\rm{lass0}}_\tau = S^0$). Note that Theorem~\ref{main_thm} is purely deterministic, and deriving a consistency result requires to lower bound the probability that~(\ref{beta_min}) holds for a well chosen value of $\beta^0_{\min}.$ Ideally we would need a tail bound for $\norm{\betahatlone(\epsilon)}_1,$ however to the best of our knowledge such results are unknown. Instead, we derive Corollary~\ref{cor:corrGaussian} in Section~\ref{subsection:sign_consistency_Gauss} using $\norm{\betahatlone(\epsilon)}_1 \leq \norm{X^T(XX^T)^{-1} \epsilon}_1.$

We now compare the result in Theorem~\ref{main_thm} to the theory of the Lasso~(\ref{lasso}). It has been shown that in an asymptotic setting the $\theta$-irrepresentable condition
\begin{equation} \label{IR}
\norm{X_{\comp{S^0}}^T X_{S^0} \left(X_{S^0}^T X_{S^0} \right)^{-1} \sign(\beta^0_{S^0})}_\infty < \theta,
\end{equation}
where $\theta \in (0,1),$ implies consistent model selection by the Lasso \citep{zhao2006, zou2006}. Irrepresentability is actually almost necessary (the inequality in~(\ref{IR}) being replaced by $\leq 1$) even in the noiseless and finite sample case \citep{buhlmann2011}. Theorem~\ref{main_thm} holds no matter what $\sign(\beta^0_{S^0})$ is, whereas the irrepresentable condition~(\ref{IR}) is defined for specific signs. For the Lasso to consistently recover $S^0$ for arbitrary signs of $\beta^0_{S^0}$, the $\theta$-uniform irrepresentable condition
\begin{equation} \label{unifIR}
\max_{\norm{\tau_{S^0}}_\infty \leq 1} \norm{X_{\comp{S^0}}^T X_{S^0} \left(X_{S^0}^T X_{S^0} \right)^{-1} \tau_{S^0}}_\infty < \theta
\end{equation}
should be assumed. The following result confirms that the condition in Theorem~\ref{main_thm} is weaker than the $\theta$-uniform irrepresentable condition.

\begin{prop} \label{IR_SNSP}
The $\theta$-uniform irrepresentable condition~(\ref{unifIR}) implies the stable null space property~(\ref{SNSP}) for any $\rho \in (\theta, 1).$
\end{prop}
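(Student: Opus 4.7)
The plan is to unfold both conditions into explicit operator-norm statements and relate them through standard $\ell_1/\ell_\infty$ duality of matrix norms.

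First I would take an arbitrary $\beta \in \ker(X)$ and split the identity $X\beta = 0$ according to $S^0$ and $\comp{S^0}$, obtaining
\[
X_{S^0}\beta_{S^0} = -X_{\comp{S^0}}\beta_{\comp{S^0}}.
\]
Since the uniform irrepresentable condition~(\ref{unifIR}) presupposes that $X_{S^0}^T X_{S^0}$ is invertible, I can multiply by $X_{S^0}^T$ and invert to get
\[
\beta_{S^0} = -\,(X_{S^0}^T X_{S^0})^{-1}X_{S^0}^T X_{\comp{S^0}}\,\beta_{\comp{S^0}} = -M^T \beta_{\comp{S^0}},
\]
where $M := X_{\comp{S^0}}^T X_{S^0}(X_{S^0}^T X_{S^0})^{-1}$.

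Next I would rewrite the uniform irrepresentable condition as an operator norm statement: by definition,
\[
\max_{\norm{\tau_{S^0}}_\infty \le 1} \norm{M\tau_{S^0}}_\infty = \norm{M}_{\infty \to \infty},
\]
so~(\ref{unifIR}) is exactly $\norm{M}_{\infty \to \infty} < \theta$. Recalling that $\norm{M}_{\infty \to \infty}$ equals the maximum $\ell_1$-norm of the rows of $M$, which in turn equals the maximum $\ell_1$-norm of the columns of $M^T$, we obtain $\norm{M^T}_{1\to 1} = \norm{M}_{\infty \to \infty} < \theta$.

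Applying this bound to $\beta_{\comp{S^0}}$ yields
\[
\norm{\beta_{S^0}}_1 = \norm{M^T \beta_{\comp{S^0}}}_1 \le \norm{M^T}_{1 \to 1}\,\norm{\beta_{\comp{S^0}}}_1 < \theta\,\norm{\beta_{\comp{S^0}}}_1 \le \rho\,\norm{\beta_{\comp{S^0}}}_1
\]
for any $\rho \in (\theta,1)$, which is exactly~(\ref{SNSP}). The corner case $\beta_{\comp{S^0}} = 0$ needs a brief remark: together with $\beta \in \ker(X)$ and the injectivity of $X_{S^0}$, it forces $\beta_{S^0}=0$, so both sides of~(\ref{SNSP}) vanish.

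There is no genuine obstacle here; the only delicate point is the $\ell_1/\ell_\infty$ duality identity $\norm{M^T}_{1\to 1} = \norm{M}_{\infty\to\infty}$, which I would either state as folklore or justify in one line by checking that both quantities equal $\max_i \sum_j \abs{M_{ij}}$ after transposition.
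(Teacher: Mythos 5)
Your proof is correct, but it takes a genuinely different route from the paper's. You argue directly from the kernel identity: writing $X_{S^0}\beta_{S^0}=-X_{\comp{S^0}}\beta_{\comp{S^0}}$, solving for $\beta_{S^0}=-M^T\beta_{\comp{S^0}}$, and invoking the duality $\norm{M^T}_{1\to 1}=\norm{M}_{\infty\to\infty}<\theta$ to get $\norm{\beta_{S^0}}_1\leq\theta\norm{\beta_{\comp{S^0}}}_1$; your handling of the degenerate case $\beta_{\comp{S^0}}=0$ via injectivity of $X_{S^0}$ is also fine. The paper instead routes through an intermediate condition: it cites Theorem~7.2 of \citet{buhlmann2011} to pass from the uniform irrepresentable condition to the $(L,S^0)$-compatibility condition~(\ref{comp_cond}) with $L=1/\rho$, and then derives the stable null space property from compatibility by a short contradiction argument on $\ker(X)$. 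Your argument is more elementary and self-contained (no external theorem), and it actually yields the marginally stronger conclusion that~(\ref{SNSP}) holds with constant $\theta$ itself, hence for every $\rho\geq\theta$. What the paper's detour buys is the additional comparison emphasized right after the proposition: since the chain passes through the compatibility condition, the paper can conclude that the stable null space property is weaker not only than uniform irrepresentability but also than the compatibility condition used in the Lasso oracle-inequality literature --- a byproduct your direct computation does not deliver.
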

\begin{proof}
By Theorem 7.2. in \citet{buhlmann2011}, the $\theta$-uniform irrepresentable condition implies the so called $(L, S^0)$-\emph{compatibility condition} for any $L \in (1, \tfrac{1}{\theta})$, which states that there exists a $\phi^2(L, S^0) > 0$ such that
\begin{equation} \label{comp_cond}
\norm{\beta_{\comp{S^0}}}_1 \leq L \norm{\beta_{S^0}}_1
\Longrightarrow
\norm{\beta_{S^0}}_1^2 \leq \frac{\abs{S^0}}{n \phi^2(L, S^0)} \norm{X \beta}_2^2.
\end{equation}
So for any $\rho \in (\theta, 1),$ the compatibility condition holds for $L = \frac{1}{\rho}.$ Now assume there exists a $\beta \in \ker(X)$ for which $\norm{\beta_{S^0}}_1 > \rho \norm{\beta_{\comp{S^0}}}_1.$ Then we have $\norm{\beta_{\comp{S^0}}}_1 < L \norm{\beta_{S^0}}_1$ and the compatibility condition~(\ref{comp_cond}) implies that $\norm{\beta_{S^0}}_1^2 \leq \frac{\abs{S^0}}{n \phi^2(L, S^0)} \norm{X \beta}_2^2.$ But $\beta \in \ker(X)$ so we conclude $\beta_{S^0} = 0$, which contradicts the assumption $\norm{\beta_{S^0}}_1 > \rho \norm{\beta_{\comp{S^0}}}_1.$
\end{proof}

Proposition~\ref{IR_SNSP} shows that the stable null space property is not only weaker than the uniform irrepresentable condition, but also weaker than the compatibility condition, which is assumed for obtaining oracle inequality for the prediction and estimation error of the Lasso \citep{buhlmann2011}. See \citet{vandegeer2009} for more conditions used in the Lasso theory and the relations between them.

\subsection{Sign consistency for correlated Gaussian designs} \label{subsection:sign_consistency_Gauss}
We show here that Theorem~\ref{main_thm} implies sign consistency of Lasso-Zero for correlated Gaussian designs, in an asymptotic framework where both $p$ and $s^0 = \abs{S^0}$ grow with $n.$ More precisely, we make the following assumptions:
\begin{enumerate}[A)]
\item \label{assumption:Gaussdesign} the rows of $X \in \R^{n \times p}$ (with $n < p$) are random and i.i.d. $N_p(0, \Sigma),$
\item \label{assumption:boundedsigmamin} $\lambda_{\min}(\Sigma) \geq \gamma^2 > 0$ for every $p,$ 
\item \label{assumption:var1} $\Sigma_{ii}=1$ for every $i$,
\item \label{assumption:Gaussiannoise} $\epsilon \sim N_n(0, \sigma^2 I),$
\end{enumerate}
where in~\ref{assumption:boundedsigmamin}) we use $\lambda_{\min}$ to denote the smallest eigenvalue. Note that assumptions~\ref{assumption:Gaussdesign}) and \ref{assumption:boundedsigmamin}) imply that  almost surely $\rank{X} = n$. Assumption~\ref{assumption:boundedsigmamin}) typically holds for covariance matrices of the form $\Sigma_{ij} = a^{\abs{i-j}}$ or $\Sigma_{ij} = a + (1-a)\bm{1}_{\{i=j\}}$ with $a \in [0, 1).$ Assumption~\ref{assumption:var1}) is common and holds under an appropriate rescaling of the covariates.

\begin{corollary} \label{cor:corrGaussian}
Under model~(\ref{lin_model}) and assumptions~\ref{assumption:Gaussdesign}), \ref{assumption:boundedsigmamin}), \ref{assumption:var1}) and \ref{assumption:Gaussiannoise}), there exist universal constants $c, c', c'' > 0$ such that for any $\rho \in (0, 1)$
\begin{equation*}
\p\left(\exists \tau > 0 : \betahat^{\operatorname{lass0}(0, 1)}_\tau \overset{\operatorname{s}}{=} \beta^0 \right)
\geq 1 - c'e^{-cn} - 1.14^{-n} - 2e^{-\frac{1}{8}(\sqrt{p} - \sqrt{n})^2},
\end{equation*}
provided that
\begin{align}
n > c'' \frac{(1 + \rho^{-1})^2}{\gamma^2} s^0 \log{p},& \label{lb:n}
\\
\beta^0_{\min} > C_\rho \frac{2 \sqrt{2} \sigma \sqrt{p}}{\gamma (\sqrt{p/n} - 1)},& \label{beta_min_cor}
\end{align}
where $C_{\rho} = \frac{2(3 + \rho)}{1 - \rho}.$ Consequently, if $p/n \to C > 1,$ Lasso-Zero achieves sign consistency with $\beta^0_{\min} = \Omega(\sqrt{n})$ and $s^0 = O(n/\log{n})$\footnote{Recall that $f(n) = \Omega(g(n))$ means that there is $K > 0$ such that $\abs{f(n)} \geq K \abs{g(n)}$ for large enough $n$, and $f(n) = O(g(n))$ means that there is $k > 0$ such that $\abs{f(n)} \leq k \abs{g(n)}$ for large enough $n.$}.
\end{corollary}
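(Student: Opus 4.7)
The strategy is to verify the deterministic hypotheses of Theorem~\ref{main_thm} with high probability and then invoke that theorem. First, the rank condition $\rank{X} = n$ holds almost surely: writing $X = Z \Sigma^{1/2}$ where $Z \in \R^{n \times p}$ has i.i.d.~$N(0,1)$ entries, assumption~B) and $p > n$ give $\rank{X} = \rank{Z} = n$ with probability one. Next, to verify the stable null space property~(\ref{SNSP}) with constant $\rho$, I would appeal to a now-standard restricted-eigenvalue / compatibility result for Gaussian designs (e.g., the Gordon-inequality argument of Raskutti--Wainwright--Yu): under~A)--C), the SNSP holds with constant $\rho$ on an event of probability at least $1 - c' e^{-cn}$ whenever $n \geq c''(1 + \rho^{-1})^2 s^0 \log p / \gamma^2$, which yields~(\ref{lb:n}) and the first exponential term in the conclusion.

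The second ingredient is to control $\norm{\betahatlone(\epsilon)}_1$ through the hint given in the text. Since $X^T(XX^T)^{-1}\epsilon$ is feasible for~(\ref{opt:BP_epsilon}) while $\betahatlone(\epsilon)$ is the $\ell_1$-minimizer, combining Cauchy--Schwarz with the identity $\norm{X^T(XX^T)^{-1}\epsilon}_2^2 = \epsilon^T(XX^T)^{-1}\epsilon$ gives
\[
\norm{\betahatlone(\epsilon)}_1 \;\leq\; \norm{X^T(XX^T)^{-1}\epsilon}_1 \;\leq\; \sqrt{p}\,\norm{X^T(XX^T)^{-1}\epsilon}_2 \;\leq\; \frac{\sqrt{p}\,\norm{\epsilon}_2}{\sigma_n(X)},
\]
where $\sigma_n(X)$ denotes the smallest nonzero singular value of $X$.

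Next I would derive two high-probability tail bounds. Since $\norm{\epsilon}_2^2/\sigma^2 \sim \chi^2_n$, a standard Chernoff bound yields $\p(\norm{\epsilon}_2 > \sigma\sqrt{2n}) \leq 1.14^{-n}$, accounting for the second exponential. For the singular value, writing $X = Z \Sigma^{1/2}$ gives $\sigma_n(X) \geq \gamma\,\sigma_n(Z)$; then Davidson--Szarek's lower tail bound $\p(\sigma_n(Z) \leq \sqrt{p} - \sqrt{n} - t) \leq e^{-t^2/2}$ applied with $t = (\sqrt{p}-\sqrt{n})/2$ produces $\sigma_n(X) \geq \gamma(\sqrt{p}-\sqrt{n})/2$ off an event of probability $\leq 2 e^{-(\sqrt{p}-\sqrt{n})^2/8}$. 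Substituting both into the display and simplifying $\sqrt{p}\,\sqrt{2n}/[\gamma(\sqrt{p}-\sqrt{n})/2]$ into $2\sqrt{2}\sqrt{p}/[\gamma(\sqrt{p/n}-1)]$ recovers the right-hand side of~(\ref{beta_min_cor}). A union bound over the three events and a direct application of Theorem~\ref{main_thm} then yield the non-asymptotic statement.

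For the asymptotic consequence, when $p/n \to C > 1$ the ratio $\sqrt{p}/(\sqrt{p/n}-1)$ is $\Theta(\sqrt{n})$, so $\beta^0_{\min} = \Omega(\sqrt{n})$ suffices for~(\ref{beta_min_cor}); since $\log p$ and $\log n$ are comparable, the lower bound $n \geq c'' s^0 \log p$ rearranges to $s^0 = O(n/\log n)$. I expect the main technical obstacle to be the SNSP verification: keeping the explicit $(1 + \rho^{-1})^2$ dependence requires careful bookkeeping in a Gordon-type lower bound for $\min\{\norm{X\beta}_2/\norm{\beta}_2 : \norm{\beta_{S^0}}_1 \geq \rho\,\norm{\beta_{\comp{S^0}}}_1\}$, typically combined with a peeling or sparsity-level union bound. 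The remaining steps reduce to textbook singular-value and chi-square concentration.
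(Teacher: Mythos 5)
Your proposal is correct and follows essentially the same route as the paper: verify the stable null space property via the restricted-eigenvalue result of Raskutti--Wainwright--Yu, bound $\norm{\betahatlone(\epsilon)}_1$ through the feasible point $X^T(XX^T)^{-1}\epsilon$ using $\ell_1\le\sqrt{p}\,\ell_2$, a smallest-singular-value bound for Gaussian matrices, and $\chi^2_n$ concentration, then take a union bound and apply Theorem~\ref{main_thm}. The only cosmetic difference is that you bound $\norm{X^T(XX^T)^{-1}\epsilon}_2$ directly via $\epsilon^T(XX^T)^{-1}\epsilon\le\norm{\epsilon}_2^2/\sigma_{\min}^2(X)$ where the paper writes out the SVD explicitly; the estimates and probability terms are identical.
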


Corollary~\ref{cor:corrGaussian} follows from Theorem~\ref{main_thm} and some results about Gaussian matrices that are reported in the appendix. The beta-min condition (\ref{beta_min_cor}) is strong and introducing noise dictionaries   improves thresholded BP for lower signals, unlike \citet{saligrama2011} who rather consider a multistage procedure where an ordinary least squares estimate is computed on the support of thresholded BP and is then thresholded again.

\subsection{Guarantees given by QUT in the low-dimensional case} \label{lowdim_FDR}

We derive here guarantees offered by QUT when $\rank{X} = p.$ By definition, QUT controls the FWER~(\ref{defFWER}) -- and therefore the FDR~(\ref{defFDR}) -- under the null model $\beta^0 = 0$. It turns out that this property extends to the low-dimensional case for any regression vector $\beta^0,$ provided all columns of $X$ are linearly independent.

\begin{prop} \label{control_FWER}
Consider the linear model~(\ref{lin_model}) with design matrix $X$ such that $\rank{X} = p.$ Then for any value of $\beta^0 \in \R^p$, Lasso-Zero tuned by QUT as in~(\ref{lass0_QUT}) and $(q, M) = (0, 1)$ controls the FWER, and therefore the FDR, at level $\alpha.$
\end{prop}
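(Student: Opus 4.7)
The plan is to exploit the fact that when $\rank(X) = p$, the basis pursuit operator $y \mapsto \betahatlone(y)$ is linear, which reduces control of the FWER for an arbitrary $\beta^0$ to the null-case guarantee built into QUT by construction.

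First I would observe that under $\rank(X) = p$ the system $X\beta = y$ has at most one solution (since $\ker X = \{0\}$), so whenever~(\ref{basis_pursuit}) is feasible its unique solution is $\betahatlone(y) = (X^T X)^{-1} X^T y$, a linear functional of~$y$. Substituting the model $y = X\beta^0 + \epsilon$ and using $\betahatlone(X\beta^0) = \beta^0$, linearity gives the key decomposition
\[
\betahatlone(y) \;=\; \beta^0 + \betahatlone(\epsilon).
\]
In particular, for every $j \in \comp{S^0}$ one has $\beta^0_j = 0$, hence $\betahatlone_j(y) = \betahatlone_j(\epsilon)$.

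Second, since the thresholding function $\eta_\tau$ in Definition~\ref{deflass0} satisfies $\eta_\tau(x) = 0 \Leftrightarrow |x| \leq \tau$, a false discovery at a coordinate $j \in \comp{S^0}$ occurs precisely when $|\betahatlone_j(\epsilon)| > \tau$. Taking a union bound over $\comp{S^0}$ yields
\[
\bigl\{\Shat^{\operatorname{lass0}}_\tau \cap \comp{S^0} \neq \emptyset\bigr\}
\;\subseteq\; \bigl\{\|\betahatlone(\epsilon)\|_\infty > \tau\bigr\}.
\]
Plugging in $\tau = \tau_\alpha^{\operatorname{QUT}} = F_T^{-1}(1-\alpha)$ from~(\ref{lass0_QUT}) with $T = \|\betahatlone(\epsilon)\|_\infty$, the right-hand probability is at most $\alpha$ by definition of the quantile. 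Therefore $\operatorname{FWER} \leq \alpha$, and since $\operatorname{FDR} \leq \operatorname{FWER}$ always, the FDR is controlled at the same level.

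The only mildly delicate point is the first step: one must confirm that in the low-dimensional regime $\betahatlone(\cdot)$ is well-defined and linear, which follows from $\ker X = \{0\}$ via the pseudoinverse $(X^T X)^{-1} X^T$. Once this is settled, the argument is essentially mechanical and uses nothing beyond the definition of QUT and the elementary inclusion $\operatorname{FDR} \leq \operatorname{FWER}$.
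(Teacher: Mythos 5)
Your proposal is correct and follows essentially the same route as the paper: both reduce to the identity $(X^TX)^{-1}X^Ty = \beta^0 + (X^TX)^{-1}X^T\epsilon$, note that a false discovery on $\comp{S^0}$ forces $\norm{\betahatlone(\epsilon)}_\infty > \tau$, and invoke the defining property of the QUT quantile. The only cosmetic difference is that the paper sidesteps the feasibility question by directly identifying $\betahatlone(y)$ with the least-squares estimate (the limit of the Lasso path), whereas you reach the same formula via uniqueness under $\ker X = \{0\}$; also, the inclusion you use is a monotonicity of the sup-norm over a subset of coordinates rather than a union bound, but the step is valid either way.
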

\begin{proof}
By Definition~\ref{deflass0} with $q=0$ and $M=1,$ Lasso-Zero thresholds the least-squares estimate $\betahat^{LS} = (X^T X)^{-1} X^T y = \beta^0 + (X^T X)^{-1} X^T \epsilon.$
Then for $\tau = \tau_\alpha^{\operatorname{QUT}}$ as in~(\ref{lass0_QUT}):
\begin{align*}
\operatorname{FWER} &= \p(\norm{\betahat^{LS}_{\comp{S^0}}}_\infty > \tau_\alpha^{\operatorname{QUT}}) 
  = \p(\norm{[(X^T X)^{-1} X^T \epsilon]_{\comp{S^0}}}_\infty > \tau_\alpha^{\operatorname{QUT}})
  \\
 & \leq \p(\norm{(X^T X)^{-1} X^T \epsilon}_\infty > \tau_\alpha^{\operatorname{QUT}})
 = \alpha.
\end{align*}
\end{proof}

The proof shows that Proposition~\ref{control_FWER} holds for any estimator that  thresholds the least-squares solution, e.g., for the Lasso when $X$ has orthogonal columns \citep{tibshirani1996}.

\section{Discussion} \label{discussion}
The main novelty provided by Lasso-Zero is to repeatedly use noise dictionaries and aggregate the obtained coefficients, here by taking the median componentwise  to detect the important predictors. This work provides promising results for the linear model, but noise dictionaries could be used in other settings, such as generalized linear models, and for pruning trees in random forest or introducing regularization in artificial neural networks.

Because Lasso-Zero is based on the limit of the Lasso path at zero, the shrinkage effect induced by the $\ell_1$-penalty is reduced. In a discussion about our simulations, Rob Tibshirani raised our attention to an extensive comparison of Lasso and best subset selection \citep{hastie2017}.
Their results suggest that best subset (least bias) performs better in high $\operatorname{SNR}$, whereas the Lasso outperforms it when the $\operatorname{SNR}$ is low. 
So one might expect Lasso to outperform Lasso-Zero as well for low $\operatorname{SNR}$. Nonetheless, preliminary results seem to indicate that Lasso-Zero can be improved by tuning the noise dictionaries' size $q$ in some adaptive way. 
Increasing or decreasing it has an influence on the shrinkage of the estimated coefficients and therefore leaves a room for improvement in both low and high $\operatorname{SNR}$ regimes.

\section{Reproducible research}
The R package \texttt{lass0} (also on CRAN) and the code that generated the figures in this article may be found at \url{https://github.com/pascalinedescloux/lasso-zero}.

\appendix

\section{Proof of Theorem~\ref{main_thm}}

\begin{lemma} \label{aux_lemma}
Under the assumptions of Theorem~\ref{main_thm},
\[
\norm{\betahatlone(y) - (\beta^0 + \betahatlone(\epsilon))}_1 
\leq 
\frac{2(1+\rho)}{1-\rho} \norm{\betahatlone(\epsilon)}_1.
\]
\end{lemma}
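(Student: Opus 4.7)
The plan is to view $y = X\beta^0 + \epsilon$ as a noiseless problem with signal $\tilde\beta := \beta^0 + \betahatlone(\epsilon)$, since $\epsilon = X\betahatlone(\epsilon)$ gives $y = X\tilde\beta$. Set $h := \betahatlone(y) - \tilde\beta$. Because both $\betahatlone(y)$ and $\tilde\beta$ are pre-images of $y$ under $X$, the error vector $h$ lies in $\ker(X)$, which is what will let us apply the stable null space property.

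Next I would use the optimality of $\betahatlone(y)$ as the BP solution for the data $y$: since $\tilde\beta$ is also feasible in problem~(\ref{basis_pursuit}) with this response, $\norm{\betahatlone(y)}_1 \leq \norm{\tilde\beta}_1$, i.e. $\norm{\tilde\beta + h}_1 \leq \norm{\tilde\beta}_1$. Splitting both sides along $S^0$ and $\comp{S^0}$ and applying the triangle inequality on $S^0$ and the reverse triangle inequality on $\comp{S^0}$, I get
\[
\norm{\tilde\beta_{S^0}}_1 - \norm{h_{S^0}}_1 + \norm{h_{\comp{S^0}}}_1 - \norm{\tilde\beta_{\comp{S^0}}}_1 \leq \norm{\tilde\beta_{S^0}}_1 + \norm{\tilde\beta_{\comp{S^0}}}_1,
\]
which rearranges to $\norm{h_{\comp{S^0}}}_1 \leq 2\norm{\tilde\beta_{\comp{S^0}}}_1 + \norm{h_{S^0}}_1$.

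Because $\supp(\beta^0) = S^0$, the restriction $\tilde\beta_{\comp{S^0}}$ equals $\betahatlone(\epsilon)_{\comp{S^0}}$, so $\norm{\tilde\beta_{\comp{S^0}}}_1 \leq \norm{\betahatlone(\epsilon)}_1$. Since $h \in \ker(X)$, the stable null space property~(\ref{SNSP}) gives $\norm{h_{S^0}}_1 \leq \rho \norm{h_{\comp{S^0}}}_1$. Substituting this into the previous inequality and solving for $\norm{h_{\comp{S^0}}}_1$ yields $\norm{h_{\comp{S^0}}}_1 \leq \frac{2}{1-\rho}\norm{\betahatlone(\epsilon)}_1$. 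Finally,
\[
\norm{h}_1 = \norm{h_{S^0}}_1 + \norm{h_{\comp{S^0}}}_1 \leq (1+\rho)\norm{h_{\comp{S^0}}}_1 \leq \frac{2(1+\rho)}{1-\rho}\norm{\betahatlone(\epsilon)}_1,
\]
which is the claim. No step looks delicate — this is the textbook stability-of-BP-under-sparsity-defect argument (cf.~\citet{foucart2013}); the only thing worth flagging is the harmless identification $\tilde\beta_{\comp{S^0}} = \betahatlone(\epsilon)_{\comp{S^0}}$, which is what replaces the usual ``noiseless sparse signal'' in the standard proof and lets the bound be stated purely in terms of $\norm{\betahatlone(\epsilon)}_1$.
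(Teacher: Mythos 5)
Your proof is correct and follows essentially the same route as the paper's: decompose the error $h = \betahatlone(y) - (\beta^0 + \betahatlone(\epsilon))$ over $S^0$ and $\comp{S^0}$, exploit the $\ell_1$-optimality of $\betahatlone(y)$ against the feasible point $\beta^0+\betahatlone(\epsilon)$ to get $\norm{h_{\comp{S^0}}}_1 \leq \norm{h_{S^0}}_1 + 2\norm{\betahatlone(\epsilon)_{\comp{S^0}}}_1$, then close the loop with the stable null space property. The only cosmetic difference is that the paper first derives the intermediate bound for an arbitrary feasible $\beta$ before specializing to $\beta^0+\betahatlone(\epsilon)$, whereas you specialize immediately; the inequalities and constants are identical.
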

\begin{proof}
For any $\beta \in \R^p$, we have
\begin{align*}
\norm{\betahatlone(y)_{\comp{S^0}}}_1 
&=
\norm{(\betahatlone(y) - \beta)_{\comp{S^0}} +\beta_{\comp{S^0}}}_1 \\
& \geq
\norm{(\betahatlone(y) - \beta)_{\comp{S^0}}}_1 - \norm{\beta_{\comp{S^0}}}_1
\end{align*}
If $\beta$ satisfies $y = X \beta$, then
\begin{align*}
\norm{\betahatlone(y)_{S^0}}_1 + \norm{(\betahatlone(y) - \beta)_{\comp{S^0}}}_1 - \norm{\beta_{\comp{S^0}}}_1
& \leq
\norm{\betahatlone(y)_{S^0}}_1 + \norm{\betahatlone(y)_{\comp{S^0}}}_1
 =
\norm{\betahatlone(y)}_1 \\ 
& \leq 
\norm{\beta}_1 
 = 
\norm{\beta_{S^0}}_1 + \norm{\beta_{\comp{S^0}}}_1,
\end{align*}
where the last inequality holds since $\beta$ is feasible for BP~(\ref{basis_pursuit}). Rearranging terms yields
\begin{align*}
\norm{(\betahatlone(y) - \beta)_{\comp{S^0}}}_1
& \leq
\norm{\beta_{S^0}}_1 - \norm{\betahatlone(y)_{S^0}}_1 + 2 \norm{\beta_{\comp{S^0}}}_1 \\
& \leq
\norm{(\betahatlone(y) - \beta)_{S^0}}_1 + 2 \norm{\beta_{\comp{S^0}}}_1.
\end{align*}
Using the feasible vector $\beta = \beta^0 + \betahatlone(\epsilon)$, we obtain
\[
\norm{(\betahatlone(y) - \betahatlone(\epsilon))_{\comp{S^0}}}_1
\leq 
\norm{\betahatlone(y)_{S^0} - (\beta^0 + \betahatlone(\epsilon))_{S^0}}_1 + 2 \norm{\betahatlone(\epsilon)_{\comp{S^0}}}_1.
\]
Now $\betahatlone(y) - (\beta^0 + \betahatlone(\epsilon))$ belongs to $\ker(X)$, so by the stable null space property~(\ref{SNSP}) $\norm{\betahatlone(y)_{S^0} - (\beta^0 + \betahatlone(\epsilon))_{S^0}}_1 \leq \rho \norm{(\betahatlone(y) - \betahatlone(\epsilon))_{\comp{S^0}}}_1$, hence
\begin{equation} \label{aux_bound}
\norm{(\betahatlone(y) - \betahatlone(\epsilon))_{\comp{S^0}}}_1
\leq 
\frac{2}{1-\rho} \norm{\betahatlone(\epsilon)_{\comp{S^0}}}_1
\leq \frac{2}{1-\rho} \norm{\betahatlone(\epsilon)}_1.
\end{equation}
The stable null space property also implies
\begin{align*}
\norm{\betahatlone(y) - (\beta^0 + \betahatlone(\epsilon))}_1
& = 
\norm{(\betahatlone(y) - (\beta^0 + \betahatlone(\epsilon)))_{S^0}}_1
+\norm{(\betahatlone(y) - \betahatlone(\epsilon))_{\comp{S^0}}}_1 \\
& \leq 
(1+\rho) \norm{(\betahatlone(y) - \betahatlone(\epsilon))_{\comp{S^0}}}_1.
\end{align*}
Combining this with the inequality~(\ref{aux_bound}) gives the statement.
\end{proof}

\begin{proof}[Proof of Theorem~\ref{main_thm}]
For every $j \in \{1, \ldots, p\}$ we have
\begin{equation*}
\begin{aligned}
\abs{\betahatlone_j(y) - \beta^0_j} 
&= 
\abs{\betahatlone_j(y) - (\beta^0_j + \betahatlone_j(\epsilon)) + \betahatlone_j(\epsilon)} 
\\
&\leq 
\abs{\betahatlone_j(y) - (\beta^0_j + \betahatlone_j(\epsilon))} + \abs{\betahatlone_j(\epsilon)}
\\
& \leq 
\norm{\betahatlone(y) - (\beta^0 + \betahatlone(\epsilon))}_1 + \norm{\betahatlone(\epsilon)}_1
\\
& \leq 
\frac{2(1+\rho)}{1-\rho} \norm{\betahatlone(\epsilon)}_1 +  \norm{\betahatlone(\epsilon)}_1
\\
&= 
\frac{3 + \rho}{1-\rho} \norm{\betahatlone(\epsilon)}_1,
\end{aligned}
\end{equation*}
where we have used Lemma~\ref{aux_lemma} in the last inequality. Choosing $\tau := \frac{3+\rho}{1-\rho} \norm{\betahatlone(\epsilon)}_1,$ this means
\begin{equation} \label{ineq:diff_coef}
\abs{\betahatlone_j(y) - \beta^0_j}  \leq \tau.
\end{equation}
So for $j \in \comp{S^0},$ $\abs{\betahatlone_j(y)} \leq \tau$ and therefore $\betahat^{\textrm{lass0}}_{\tau, j}(y) = 0.$ Now for $j \in S^0,$ assumption~(\ref{beta_min}) gives $\abs{\beta^0_j} > 2 \tau.$ Combining this with~(\ref{ineq:diff_coef}) implies that $\sign(\betahatlone_j(y)) = \sign(\beta^0_j)$ and $\abs{\betahatlone_j(y)} > \tau,$ hence $\sign(\betahat^{\textrm{lass0}}_{\tau, j} (y)) = \sign(\beta^0_j).$

\end{proof}

\section{Proof of Corollary~\ref{cor:corrGaussian}}

The stable null space property holds with high probability by the following result.
\begin{lemma} \label{lemma:NSP}
Assuming~\ref{assumption:Gaussdesign}), \ref{assumption:boundedsigmamin}) and \ref{assumption:var1}), there exist universal constants $c, c', c'' > 0$ such that for any $\rho \in (0, 1),$ $X$ satisfies the stable null space property with respect to $S^0$ and with constant $\rho$ with probability at least $1- c'e^{-cn},$ provided~(\ref{lb:n}) holds.
\end{lemma}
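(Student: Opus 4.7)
The plan is to derive this from the Raskutti--Wainwright--Yu lower isometry bound for correlated Gaussian matrices, followed by a short contradiction argument. Under assumptions \ref{assumption:Gaussdesign})--\ref{assumption:var1}), that result yields universal constants $c_1, c_2, c_3, c_4 > 0$ such that, with probability at least $1 - c_1 e^{-c_2 n}$,
\begin{equation*}
\frac{\norm{X \theta}_2}{\sqrt{n}} \geq c_3\, \gamma\, \norm{\theta}_2 - c_4 \sqrt{\frac{\log p}{n}}\,\norm{\theta}_1 \qquad \text{for every } \theta \in \R^p.
\end{equation*}
The $\gamma$ factor enters through \ref{assumption:boundedsigmamin}) via $\norm{\Sigma^{1/2}\theta}_2 \geq \gamma \norm{\theta}_2$, while the unit-variance assumption \ref{assumption:var1}) makes the coefficient of $\norm{\theta}_1$ universal. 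I would cite this bound rather than reprove it: the heavy lifting (Gaussian concentration of quadratic forms, a Gordon-type comparison inequality, and a covering argument over sparse cones) is already packaged in that reference.

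Specializing to an arbitrary $\beta \in \ker(X)$, since $X \beta = 0$ the displayed inequality rearranges to
\begin{equation*}
\norm{\beta}_1 \geq \frac{c_3\, \gamma}{c_4}\sqrt{\frac{n}{\log p}}\, \norm{\beta}_2.
\end{equation*}
I then argue by contradiction: assume the stable null space property fails at $\beta$, i.e., $\norm{\beta_{S^0}}_1 > \rho \norm{\beta_{\comp{S^0}}}_1$. By Cauchy--Schwarz on the index set $S^0$ of size $s^0$, $\norm{\beta_{S^0}}_1 \leq \sqrt{s^0}\,\norm{\beta_{S^0}}_2 \leq \sqrt{s^0}\,\norm{\beta}_2$, whence
\begin{equation*}
\norm{\beta}_1 = \norm{\beta_{S^0}}_1 + \norm{\beta_{\comp{S^0}}}_1 < (1 + \rho^{-1})\norm{\beta_{S^0}}_1 \leq (1 + \rho^{-1})\sqrt{s^0}\,\norm{\beta}_2.
\end{equation*}

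Combining the last two displays gives $\sqrt{n/\log p} < \frac{c_4}{c_3\, \gamma}(1 + \rho^{-1})\sqrt{s^0}$, i.e., $n < \frac{c_4^2}{c_3^2}\cdot\frac{(1+\rho^{-1})^2}{\gamma^2}\, s^0\, \log p$, which is precisely the negation of (\ref{lb:n}) for the universal choice $c'' := c_4^2/c_3^2$. Hence under (\ref{lb:n}) the contradiction hypothesis is ruled out for every $\beta \in \ker(X)$ simultaneously, so the stable null space property holds on the same event of probability at least $1 - c' e^{-c n}$ with $c' := c_1$ and $c := c_2$. The one subtle point is that the argument must be uniform over $\beta \in \ker(X)$, but this is free: the Raskutti--Wainwright--Yu inequality already holds for all $\theta \in \R^p$ on a single high-probability event, so no extra union bound over $\binom{p}{s^0}$ supports is needed. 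The main obstacle is really just to invoke that external result cleanly with constants lining up with (\ref{lb:n}); everything past that point is an elementary Cauchy--Schwarz plus a contrapositive.
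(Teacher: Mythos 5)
Your proof is correct and follows essentially the same route as the paper, which simply cites Corollary 1 of Raskutti et al.\ (2010) for the restricted eigenvalue condition with parameters $(\rho^{-1}, \gamma/8)$ and asserts that it implies the stable null space property. You have merely unpacked that implication explicitly (lower isometry bound on $\ker(X)$, then the cone/Cauchy--Schwarz contradiction), which is a faithful expansion of the paper's one-line argument; the only cosmetic omission is noting that $\beta \neq 0$ before dividing by $\norm{\beta}_2$, which is harmless since the property holds trivially at $\beta = 0$.
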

\begin{proof}
It follows directly from Corollary 1 in \citet{raskutti2010}, which proves that under our assumptions $\frac{1}{n}X^T X$ satisfies the restricted eigenvalue condition with parameters $(\rho^{-1}, \gamma/8)$ with high probability, which implies the stable null space property for $X$.
\end{proof}

Next result is useful to prove the beta-min condition~(\ref{beta_min}) holds with high probability.
\begin{lemma} \label{lemma:sing_val_Gauss}
Let $G$ be an $n \times p$ matrix with $n < p$ and i.i.d.~entries $N(0, 1).$ Then the smallest singular value $\sigma_{\min}(G)$ of $G$ satisfies
\[
\p\left( \sigma_{\min}(G) \geq \frac{1}{2} (\sqrt{p} - \sqrt{n})\right) \geq 1 - 2e^{-\frac{1}{8}(\sqrt{p} - \sqrt{n})^2}.
\]
\end{lemma}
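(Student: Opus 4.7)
The plan is to combine a sharp lower bound on the expected smallest singular value of a Gaussian matrix with Gaussian concentration of measure. Since $\sigma_{\min}(G) = \sigma_{\min}(G^T)$, it is convenient to view $G^T$ as a $p \times n$ matrix with i.i.d.\ $N(0,1)$ entries and $p > n$, i.e.\ a tall matrix for which the smallest singular value is bounded away from zero with high probability.

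First, I would invoke the classical Gordon/Davidson--Szarek bound on Gaussian matrices, which states that for a $p \times n$ matrix $\tilde G$ with $p > n$ and i.i.d.\ $N(0,1)$ entries,
\begin{equation*}
\E[\sigma_{\min}(\tilde G)] \;\geq\; \sqrt{p} - \sqrt{n}.
\end{equation*}
Applied to $\tilde G = G^T$, this gives $\E[\sigma_{\min}(G)] \geq \sqrt{p} - \sqrt{n}$.

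Next, I would use that the map $A \mapsto \sigma_{\min}(A)$ is $1$-Lipschitz with respect to the Frobenius norm: by Weyl's inequality for singular values, $|\sigma_{\min}(A) - \sigma_{\min}(B)| \leq \|A - B\|_{\mathrm{op}} \leq \|A - B\|_F$. Viewing $\sigma_{\min}(G)$ as a $1$-Lipschitz function of the $np$ i.i.d.\ $N(0,1)$ entries of $G$, the standard Gaussian concentration inequality yields, for every $t > 0$,
\begin{equation*}
\p\bigl(\sigma_{\min}(G) \leq \E[\sigma_{\min}(G)] - t\bigr) \;\leq\; 2 e^{-t^2/2}.
\end{equation*}

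Finally, I would take $t = \tfrac{1}{2}(\sqrt{p} - \sqrt{n})$, so that $\E[\sigma_{\min}(G)] - t \geq \tfrac{1}{2}(\sqrt{p} - \sqrt{n})$, and read off
\begin{equation*}
\p\Bigl(\sigma_{\min}(G) \geq \tfrac{1}{2}(\sqrt{p}-\sqrt{n})\Bigr) \;\geq\; 1 - 2 e^{-\tfrac{1}{8}(\sqrt{p}-\sqrt{n})^2},
\end{equation*}
which is the claimed bound. There is no real obstacle here: both the expectation lower bound and the Gaussian concentration inequality are off-the-shelf, and the only ``work'' is choosing $t$ to absorb $\E[\sigma_{\min}(G)]$ into the desired $\tfrac{1}{2}(\sqrt{p}-\sqrt{n})$ threshold.
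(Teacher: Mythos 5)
Your proof is correct and is essentially the paper's argument unpacked: the paper simply cites equation (2.3) of Rudelson and Vershynin (2010), and the Gordon/Davidson--Szarek expectation bound $\E[\sigma_{\min}(G^T)] \geq \sqrt{p}-\sqrt{n}$ combined with Gaussian concentration for the $1$-Lipschitz map $A \mapsto \sigma_{\min}(A)$, with $t = \tfrac{1}{2}(\sqrt{p}-\sqrt{n})$, is exactly the standard proof of that cited inequality. All steps check out (your two-sided concentration bound is even slightly weaker than needed, which is why the factor $2$ appears, matching the statement).
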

\begin{proof}
See \citet[eq. (2.3)]{rudelson2010}.
\end{proof}

The proof of Corollary~\ref{cor:corrGaussian} is now a straightforward adaptation of the one of Lemma V.4 in \citet{saligrama2011} but is included for the sake of completeness.
\begin{proof}[Proof of Corollary~\ref{cor:corrGaussian}]
By~\ref{assumption:Gaussdesign}), one can write $X =  G \Sigma^{1/2}$ with $G$ as in Lemma~\ref{lemma:sing_val_Gauss}, and using assumption~\ref{assumption:boundedsigmamin}) we get
$
\sigma_{\min}(X) \geq  \sigma_{\min}(\Sigma^{1/2}) \sigma_{\min}(G) \geq \gamma \sigma_{\min}(G).
$
So by Lemma~\ref{lemma:sing_val_Gauss}
\begin{equation}\label{prob:bound_sigmamin}
\p\left(X \in \mathcal{X} \right) \geq 1- 2e^{-\frac{1}{8}(\sqrt{p} - \sqrt{n})^2},
\end{equation}
where
$\mathcal{X} := \{X \in \R^{n \times p} \ \mid \ \sigma_{\min}(X) \geq \frac{\gamma \sqrt{n}}{2} (\sqrt{p/n} - 1)  \}.$ 

For now consider a fixed $X \in \mathcal{X}.$ The vector $\omega := X^T (X X^T)^{-1}\epsilon$ being feasible for~(\ref{opt:BP_epsilon}), 
\begin{equation}\label{ineq:norm1_betaBP}
\norm{\betahatlone(\epsilon)}_1 \leq \norm{\omega}_1.
\end{equation}
We write the SVD decomposition $X^T = U \Lambda V^T,$ with $U, V$ square orthonormal matrices of size $p$ and $n$ respectively, $\Lambda = \left(\begin{smallmatrix} D \\ 0_{(p-n)\times n} \end{smallmatrix}\right)$ with $D = \diag(\sigma_1(X), \ldots, \sigma_n(X))$ 
and where $0_{(p-n)\times n}$ is the zero matrix of size $(p-n) \times n.$ Then 
$
\omega = U  \Lambda^{-1} V^T \epsilon,
$
where with a slight abuse of notation 
$
\Lambda^{-1} = 
\left( \begin{smallmatrix} D^{-1} \\ 0_{(p-n)\times n} \end{smallmatrix} \right).
$
Now $U$ is orthonormal, so $\norm{\omega}_2 = \norm{\Lambda^{-1} V^T \epsilon}_2 = \norm{D^{-1} V^T \epsilon}_2.$ By assumption~\ref{assumption:Gaussiannoise}), the vector $\tilde{\omega} := D^{-1} V^T \epsilon$ follows a multivariate Gaussian distribution with covariance matrix given by $\diag(\frac{\sigma^2}{\sigma_1^2(X)}, \ldots, \frac{\sigma^2}{\sigma_n^2(X)}).$ We have
\begin{equation*}
\begin{aligned}
\norm{\omega}_2^2 = \norm{\tilde{\omega}}_2^2 = \sum_{i=1}^n \tilde{\omega}_i^2 &= \sum_{i=1}^n\frac{\sigma^2}{\sigma_i^2(X)} \left(\frac{\sigma_i(X)}{\sigma} \tilde{\omega}_i\right)^2 
\leq \frac{4 \sigma^2}{n \gamma^2 (\sqrt{p/n} - 1)^2} \sum_{i=1}^n\left(\frac{\sigma_i(X)}{\sigma} \tilde{\omega}_i\right)^2,
\end{aligned}
\end{equation*}
where the inequality holds since $X \in \mathcal{X}.$ Now $\sum_{i=1}^n\left(\frac{\sigma_i(X)}{\sigma} \tilde{\omega}_i\right)^2 \sim \chi^2_n$ so it is upper bounded by $2n$ with probability larger than $1 - 1.14^{-n}$ (this is a corollary of Lemma 1 in \citet{laurent2000}), and therefore
\begin{equation} \label{bound:norm2_omega}
\p\left(\norm{\omega}_2 > \frac{2\sqrt{2} \sigma}{\gamma (\sqrt{p/n} -1)}\right) \leq 1.14^{-n}.
\end{equation}
Recall that~(\ref{bound:norm2_omega}) holds for fixed $X \in \mathcal{X}.$ Now for random $X$ we have
\begin{equation}
\begin{aligned}
\p\left( \norm{\omega}_1 > \frac{2 \sqrt{2} \sigma \sqrt{p}}{\gamma (\sqrt{p/n} - 1)} \right)
&\leq
\p \left(\sqrt{p} \norm{\omega}_2 >  \frac{2 \sqrt{2} \sigma \sqrt{p}}{\gamma (\sqrt{p/n} - 1)}\right)
\\
&= 
\p \left(\norm{\omega}_2 >  \frac{2 \sqrt{2} \sigma}{\gamma (\sqrt{p/n} - 1)}\right)
\\
&\leq
\p \left(\norm{\omega}_2 >  \frac{2 \sqrt{2} \sigma}{\gamma (\sqrt{p/n} - 1)} \ \mid \ X \in \mathcal{X}\right) + \p(X \notin \mathcal{X}).
\\
& \leq
1.14^{-n} + 2e^{-\frac{1}{8}(\sqrt{p} - \sqrt{n})^2}, \label{prob:norm1_omega}
\end{aligned}
\end{equation}
where the last inequality comes from~(\ref{prob:bound_sigmamin}) and the fact that~(\ref{bound:norm2_omega}) holds for every $X \in \mathcal{X}.$ From~(\ref{beta_min_cor}), (\ref{ineq:norm1_betaBP}) and~(\ref{prob:norm1_omega}), we get
$
\p\left( \beta^0_{\min} \leq C_{\rho} \norm{\betahatlone(\epsilon)}_1\right) \leq 1.14^{-n} + 2e^{-\frac{1}{8}(\sqrt{p} - \sqrt{n})^2}$.
Together with Lemma~\ref{lemma:NSP}, this guarantees that all conditions of Theorem~\ref{main_thm} hold with probability at least $1 - c'e^{-cn} - 1.14^{-n} - 2e^{-\frac{1}{8}(\sqrt{p} - \sqrt{n})^2},$ which concludes our proof.
\end{proof}

\section*{Acknowledgements}
Research funded by the Department of Mathematics of the University of Geneva. Computations were performed at University of Geneva on the Baobab cluster. We thank the Statistics Department of Stanford University for receiving the first author as a visitor; Rob Tibshirani, Trevor Hastie, Jonathan Taylor, as well as Claire Boyer and Julie Josse for interesting discussions.  

\bibliographystyle{chicago}
\bibliography{lass0}
\end{document}